\documentclass{article}
\usepackage{silence}
\PassOptionsToPackage{numbers, compress}{natbib}

\usepackage[preprint]{neurips_2026}

\usepackage[utf8]{inputenc} 
\usepackage[T1]{fontenc}    
\usepackage{hyperref}       
\usepackage{url}            
\usepackage{booktabs}       
\usepackage{amsfonts}       
\usepackage{nicefrac}       
\usepackage{microtype}      
\usepackage{xcolor}         

\newcommand{\mn}[1]{\textcolor{magenta}} 
\usepackage{graphicx}
\usepackage{tabularx}
\usepackage{amsthm}
\newtheorem{theorem}{Theorem}

\newtheorem{corollary}[theorem]{Corollary}
\newtheorem{proposition}[theorem]{Proposition}

\usepackage{amsmath}
\usepackage{multirow}
\usepackage{wrapfig}
\usepackage{times}
\usepackage{latexsym}
\usepackage{amsfonts}
\usepackage{booktabs}
\usepackage{tabularx}
\usepackage{graphicx}
\usepackage{subcaption}
\usepackage[table]{xcolor}
\usepackage{booktabs}
\usepackage{longtable}
\usepackage{array}

\newcommand{\bz}{\mathbf{z}}
\newcommand{\bZ}{\mathbf{Z}}

\title{The Confounder Trap: Treatment-Encoding Representations in Causal Inference with Text}

\author{%
  Marie Neubrander \\
  Department of Statistical Science\\
  Duke University\\
  Durham, NC, 27705 \\
   \And
  Graham Tierney \\
  Department of Statistical Science\\
  Duke University\\
  Durham, NC, 27705 \\
   \AND
   Alexander Volfovsky \\
   Department of Statistical Science\\
    Duke University\\
    Durham, NC, 27705 \\
}

\begin{document}
\maketitle
\begin{abstract}

Estimating causal effects of linguistic properties from observational text is difficult because the same document can contain both the treatment of interest and the non-treatment textual attributes needed for adjustment. Existing approaches often learn representations from the full text to capture latent confounding, but when treatment status is itself encoded by words in the text, these representations can directly encode treatment. This creates a confounder trap: richer representations can make treated and control documents separable, inducing overlap violations even when the underlying causal problem satisfies overlap. We study latent text treatments that are encoded through lexicons or other treatment-defining lexical information, and propose masking-based adjustment representations that remove this lexical treatment signal before representation learning. We formalize representation-induced overlap failure, prove that deletion masking preserves overlap for bag-of-words/topic-model representations, and characterize replacement masking as a natural relaxation for large language models that hides treatment-defining tokens while preserving word order and context. Across simulations, masking improves overlap diagnostics, stabilizes treatment effect estimates, and reduces bias relative to adjustment methods that learn from the unmasked text.
\end{abstract}

\section{Introduction}
\label{sec:intro}

In many applications, researchers want to estimate how linguistic properties of text causally affect reader behavior, attitudes, or downstream decisions. For example, prior work studies how positive product reviews affect sales \citep{pryzant-etal-2021-causal}, whether polite complaints receive faster responses \citep{pryzant-etal-2021-causal, gui2022causal}, how candidate descriptions change voter opinions \citep{fong-grimmer-2016-discovery}, and how information in clinical notes predicts patient responses to treatment\citep{mozer2024leveragingtextdatacausal}. These questions are text-as-treatment problems: the intervention of interest is not simply exposure to a document, but exposure to an encoded linguistic property.

Text-as-treatment problems are especially difficult because texts vary along many dimensions at once. A review can differ in sentiment, product category, specificity, writing quality, and author style. A political message can differ in humility, topic, argument structure, and ideological content. To isolate the effect of one linguistic property, the analyst must compare texts that differ in the treatment while adjusting for other textual attributes that also affect the outcome or the assignment of treatment; we refer to these as adjustment-relevant non-treatment textual attributes.


This creates a tension between two key assumptions in observational causal inference: ignorability and overlap \citep{rubin1983propensity}. The \textit{overlap} assumption requires that for alladjustment-relevant non-treatment attributes, there is a nonzero probability of the corresponding document being treated or control. Intuitively, this ensures we can compare treated and untreated units that are similar in their covariates.  At the same time, we must also satisfy \textit{ignorability}. This requires measuring and adjusting for enough of the non-treatment textual attributes to remove confounding. Overlap and ignorability have competing interests: we must capture enough confounding information to satisfy ignorability, while not so much that we directly encode treatment status. Once treatment is directly recoverable from the representation, treated and control documents occupy separated regions of the representation space and overlap fails.

To illustrate, suppose treatment is lexicon-based positive/negative sentiment in product reviews, and the outcome is whether a reader purchases the product. Product category may be adjustment-relevant: movies and books may differ in baseline purchase rates, and also in the distribution of positive and negative reviews (Figure~\ref{fig:trap} Panel A). But sentiment words such as "loved" and "hated" are not adjustment variables; they define treatment. An embedding learned from the full review will naturally encode them, enabling the representation to classify treatment status so well that it violates overlap and becomes the wrong object for causal adjustment (Figure~\ref{fig:trap} Panel B).

\begin{figure}[t]
\centering
\includegraphics[width=.9\linewidth]{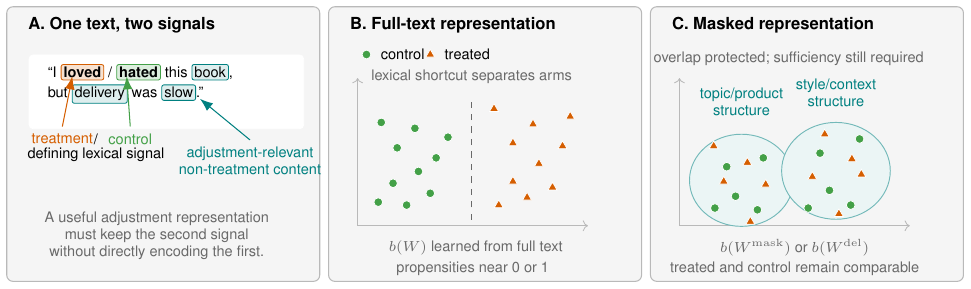}
\caption{The confounder trap. (A) A document contains both treatment-defining words and adjustment-relevant non-treatment content. (B) Representations learned from full text can collapse overlap. (C) Masking treatment-defining words before representation learning removes this separation while preserving non-treatment structure for adjustment.}
\label{fig:trap}
\end{figure}

A growing body of work learns text representations for causal adjustment using topic models, bag-of-words features, or large language models  \citep{gui2022causal, veitch2020adapting, pryzant-etal-2021-causal, li-etal-2025-using}. 
However, when the treatment is a component of the text, these representations can induce rather than solve a causal identification problem  \citep{katta2025llm}:  the learned representation encodes the treatment. We call this mechanism {direct treatment encoding}. 
We focus empirically on treatments with identifiable lexical treatment signals
such as sentiment lexicons, partisan-framing terms, or hedging words and propose a masking framework to avoid direct treatment encoding while learning representations sufficient for adjustment.

One natural response is to erase treatment information from learned representations post hoc. For instance, LEACE \citep{belrose2025leaceperfectlinearconcept} gives a closed-form linear projection that removes a target concept from a representation. Though designed for concept erasure rather than causal inference, it and similar representation-editing methods are natural candidates for this problem. However, erasing \emph{all} treatment signal is the wrong target: confounders are by definition treatment-predictive, so removing all treatment information also discards adjustment signal needed for identification. The challenge is to remove the \emph{lexical shortcut} while preserving confounding structure.

\textbf{Contributions. } First, we formalize representation-induced overlap failure: when a representation predicts treatment better than the true confounders allow, overlap vanishes. 
Second, for treatments with identifiable removable treatment signals,

we develop a masking framework with two variants (Figure~\ref{fig:trap} Panel C). Deletion masking provably preserves overlap; replacement masking relaxes this setting for use with LLMs. Third, we implement masking for both topic models and large language models, showing that it improves overlap, stabilizes treatment effect estimates, and reduces bias relative to unmasked methods.

Sections~\ref{sec:CB} and \ref{sec:theory} formalize the problem and framework. Sections~\ref{sec:topic} and \ref{sec:llm} give implementations and simulations for topic models and LLMs, respectively. Section~\ref{sec:real-data} gives a real-data application.

\section{Potential Outcomes and Text-as-Treatment}
\label{sec:CB}

We use the potential outcomes framework to define the causal effect of a latent linguistic treatment. Suppose an individual reads a document with words $\mathbf{W}_i$ from which they perceive a treatment of interest $T_i \in \{0,1\}$ alongside other textual attributes $\bZ_i$. After reading, an outcome $Y_i$ is realized.  We additionally assume an observed document label $D_i$ related to the treatment of interest, either assigned by an analyst or measured directly from the text.  Figure~\ref{fig:dag} shows the corresponding DAG: a label $D_i$ generates the words $\mathbf{W}_i$. The reader perceives latent treatment $T_i$ and confounding attributes $\bZ_i$ as deterministic functions of $\mathbf{W}_i$, both of which affect $Y_i$.
\begin{wrapfigure}{r}{0.38\columnwidth}
\centering
\includegraphics[width=0.37\columnwidth]{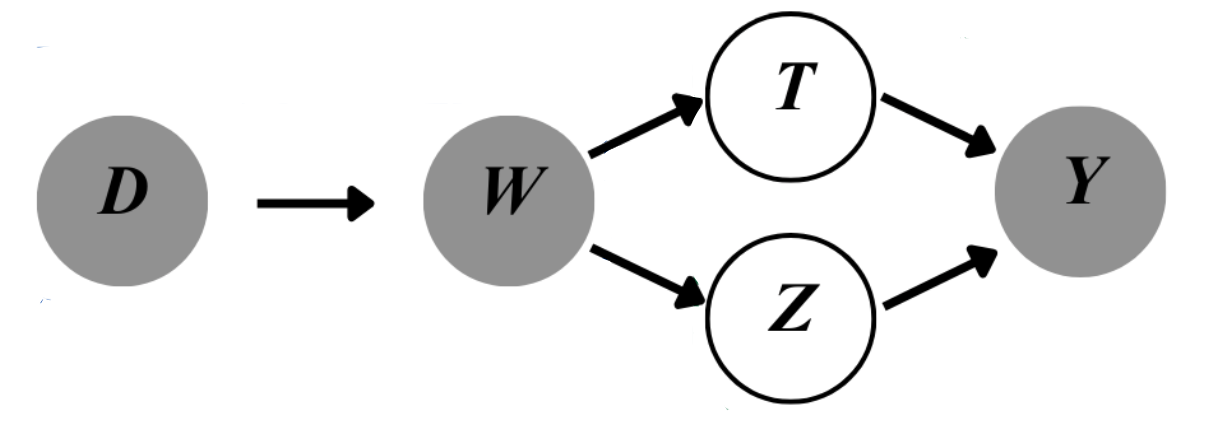}
 \caption{DAG. Grey nodes: observed outcome $Y$, labels $D$, and words ${W}$. White nodes: unobserved latent treatment $T$ and confounders ${Z}$. 
 }
  \label{fig:dag}
\end{wrapfigure}
We work under the simplifying alignment $D_i = T_i$: the label is assumed to match the latent perceived treatment.  
For treatments whose defining signal is observable in the text,
the assumption is natural: treatment status is determined by the presence of specific words, and it is reasonable to assume a reader correctly identifies those words.  This alignment allows us to identify the effect of changing the latent treatment while holding fixed the non-treatment textual attributes:
\begin{equation*}
\label{eq:tau_t}
    \tau_t = \mathbb{E}_\bZ\bigl[Y(T=1,\bZ) - Y(T=0,\bZ)\bigr].
\end{equation*}

This differs from the effect of assigning a label $D$, which may also change non-treatment attributes. Appendix \ref{app:textest} develops this distinction in more detail.

If $\bZ_i$ were observed, standard estimation methods, such as outcome regression or inverse propensity score weighting, could adjust for it.

The challenge is that $\bZ_i$ is not directly observed; analysts substitute a learned representation. We write $Y(t)$ for $Y(t, \bZ)$ and abbreviate $\tau_t$ as $\tau = \mathbb{E}[Y(1) - Y(0)]$.  Let $B = b(\mathbf{W})$ denote a learned representation of the document.  Identification of $\tau$ from the observed data requires four assumptions.

\textbf{A1: SUTVA} If $T = t$, then $Y = Y(t)$.  There is no interference  and no hidden versions of treatment.

\textbf{A2: Latent mean ignorability} For $t \in \{0,1\}$, $\mathbb{E}\bigl[Y(t) \mid T = t,\, \bZ,\, B\bigr] = \mathbb{E}\bigl[Y(t) \mid \bZ,\, B\bigr] .$ That is, conditional on the latent confounders $\bZ$ and the representation $B$, treatment assignment carries no additional information about mean potential outcomes.

\textbf{A3: Outcome sufficiency} For $t\in\{0,1\}$, $\mathbb{E}\bigl[Y(t) \mid Z,\, B\bigr] = \mathbb{E}\bigl[Y(t) \mid B\bigr].$ $B$ retains all relevant information from $Z$.

\textbf{A4: Representation level overlap} For $t \in \{0,1\}$,
$\Pr(T = t \mid B) > 0$. Any value of the representation $B$ has a nonzero probability of being associated with either treatment status.

\begin{theorem}
\label{thm:ident}
Under Assumptions~A1-A4, $\tau = \mathbb{E}_B\!\bigl[\mathbb{E}(Y \mid T=1,\, B) - \mathbb{E}(Y \mid T=0,\, B)\bigr].$
That is, $B$ is sufficient for mean adjustment for the ATE (proof in Appendix~\ref{app:th0}).
\end{theorem}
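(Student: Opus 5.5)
The plan is to fix $t\in\{0,1\}$, show $\mathbb{E}[Y(t)\mid B]=\mathbb{E}(Y\mid T=t,B)$ for almost every value of $B$, and then average over $B$. Writing $\tau=\mathbb{E}[Y(1)]-\mathbb{E}[Y(0)]$ and using the tower property, $\mathbb{E}[Y(t)]=\mathbb{E}_B\bigl[\mathbb{E}[Y(t)\mid B]\bigr]$. The theorem then follows by subtracting the two identities.

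\textbf{Overlap.} First I will use A4 to ensure that $\mathbb{E}(Y\mid T=t,B)$ is well defined. Because $\Pr(T=t\mid B)>0$ for every value of $B$, conditioning on the event $\{T=t\}$ jointly with $B$ is legitimate on the whole support of $B$. The outer expectation $\mathbb{E}_B$ therefore never integrates an undefined regression.

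\textbf{Chain of equalities.} For fixed $t$ I will write
\begin{align*}
\mathbb{E}(Y\mid T=t,B) &= \mathbb{E}[Y(t)\mid T=t,B] \\
&= \mathbb{E}\bigl[\,\mathbb{E}[Y(t)\mid T=t,\bZ,B]\;\big|\;T=t,B\bigr] \\
&= \mathbb{E}\bigl[\,\mathbb{E}[Y(t)\mid \bZ,B]\;\big|\;T=t,B\bigr] \\
&= \mathbb{E}\bigl[\,\mathbb{E}[Y(t)\mid B]\;\big|\;T=t,B\bigr] \\
&= \mathbb{E}[Y(t)\mid B].
\end{align*}
The steps are justified as follows.
\begin{itemize}
\item The first equality is SUTVA (A1).
\item The second is the tower property, inserting $\bZ$ into the conditioning set.
\item The third is latent mean ignorability (A2).
\item The fourth is outcome sufficiency (A3), which replaces $\mathbb{E}[Y(t)\mid\bZ,B]$ by a function of $B$ alone.
\item The last holds because that function of $B$ is measurable with respect to $(T,B)$.
\end{itemize}

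\textbf{Expected obstacle.} I expect the main difficulty to be measure-theoretic bookkeeping rather than any substantive step. A2 and A3 hold only almost surely, so each equality holds almost surely on $\{T=t\}$. I must make sure these versions agree on the set where they are evaluated, and A4 is what guarantees that set carries all of the mass of $B$. Mentioning this briefly should suffice.
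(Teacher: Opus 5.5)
Your proposal is correct and takes the same route as the paper's proof in Appendix~\ref{app:th0}: apply SUTVA, then ignorability, then the tower property over $B$, and subtract the $t=1$ and $t=0$ identities. Your version is more careful than the paper's. The paper passes directly from $\mathbb{E}(Y(t)\mid T=t,B)$ to $\mathbb{E}(Y(t)\mid B)$ citing A2 alone, and credits A3 with well-definedness. You instead insert $\bZ$ by the tower property and use A3 to eliminate it, which is genuinely needed since A2 also conditions on $\bZ$. You also correctly assign the well-definedness and support-matching role to A4.
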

Our contribution concerns ensuring A4 is satisfied, which is uniquely fragile in text settings.

\section{Overlap and Masking}
\label{sec:theory}

Theorem~\ref{thm:ident} showed that any representation $B$ satisfying mean ignorability, sufficiency and overlap suffices for identification of $\tau$. The challenge in text settings is that the same words encode both treatment and confounders, so a representation flexible enough to capture confounding (satisfying the ignorability assumption) may also recover treatment exactly (violating the overlap assumption). 

The key object in our framework is a treatment-removing transformation of the text that preserves adjustment-relevant non-treatment content. We assume that such a transformation is available to the analyst and can be performed prior to the causal analysis pipeline. We assume that the transformed texts carry no direct treatment information beyond the latent adjustment attributes, meaning that $T$ is independent of the transformed text, conditional on $\mathbf{Z}$.
When this holds, any treatment predictability remaining in the transformed text is attributable to $\mathbf{Z}$, and is therefore legitimate adjustment signal rather than direct treatment encoding. Lexicon deletion is one concrete case in which such a transformation is easy to define.

\subsection{Overlap Bounds}
The \textit{true} propensity scores $e^{\star}(\mathbf{z}) = \Pr(T{=}1 \mid \mathbf{Z}{=}\mathbf{z})$ are unobservable
; 
we assume the true confounders satisfy overlap: $\alpha^{\star} \leq e^{\star}(\mathbf{Z}) \leq 1 - \alpha^{\star}$ for some $\alpha^{\star} \in (0, \frac{1}{2})$. In its place, a learned representation $b(\mathbf{W})$ induces the propensity score $e_b(\mathbf{w}) = \Pr(T{=}1 \mid b(\mathbf{w}))$. 
Bayes error is used to measure how much treatment information $b(\mathbf{W})$ carries:
\begin{equation}
\label{eq:bayes-error}
    R_b \;=\; \mathbb{E}_{\mathbf{W}}\!\big[\min\!\big(\Pr(T{=}1 \mid b(\mathbf{W})),\; 1 - \Pr(T{=}1 \mid b(\mathbf{W}))\big)\big].
\end{equation}
$R_b$ is the \textit{irreducible error} of predicting $T$ from $b(\mathbf{W})$: the lowest misclassification rate achievable by any classifier. It equals $0$ when $b(\mathbf{W})$ determines $T$ perfectly and $\min(\Pr(T{=}1), \Pr(T{=}0))$ when $b(\mathbf{W})$ is independent of $T$ (i.e., the best one can do is predict the majority class). It is model free, depending only on the joint distribution of $(T, b(\mathbf{W}))$.

Let $R^{\star} = \mathbb{E}_\mathbf{Z}[\min(e^\star(\mathbf{Z}), 1-e^\star(\mathbf{Z}))]$ denote the (unknown) Bayes error under the \textit{true} confounders $\mathbf{Z}$. The overlap assumption implies $R^{\star} > 0$. Comparing $R_b$ to $R^{\star}$ characterizes what a learned representation has captured. If  $R_b \approx R^{\star}$, then confounding is captured faithfully. If $R_b > R^{\star}$, then  some confounding is missed; ignorability may fail. If $R_b < R^{\star}$ then $b(\mathbf{W})$ classifies $T$ better than the true confounders allow, 
violating \textbf{A4}.
This is the scenario of interest.
\begin{theorem}[Overlap bound]
\label{thm:approx}
For any representation mapping $b$ and any $\alpha \in (0, 1/2)$,
$
\Pr\!\big(\alpha \leq e_b(\mathbf{W}) \leq 1 - \alpha\big) \;\leq\; R_b/\alpha.
$
Proof is in Appendix~\ref{app:th1}.
\end{theorem}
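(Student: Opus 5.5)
The bound follows from Markov's inequality applied to the nonnegative random variable $M = \min\!\big(e_b(\mathbf{W}),\, 1 - e_b(\mathbf{W})\big)$. Since $e_b(\mathbf{W}) = \Pr(T{=}1 \mid b(\mathbf{W}))$, the definition in \eqref{eq:bayes-error} gives exactly $R_b = \mathbb{E}_{\mathbf{W}}[M]$. So the plan is to rewrite the overlap event in terms of $M$ and then invoke Markov.

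The first step is the event identity
\[
\{\alpha \le e_b(\mathbf{W}) \le 1-\alpha\} \;=\; \{M \ge \alpha\}.
\]
For one direction, if $e_b \ge \alpha$ and $1 - e_b \ge \alpha$, then their minimum is at least $\alpha$. Conversely, if $\min(e_b, 1-e_b) \ge \alpha$, then both terms are at least $\alpha$, which is the same as $\alpha \le e_b \le 1-\alpha$. This uses $\alpha < 1/2$ only to ensure the interval is nonempty; the identity itself holds for any $\alpha$. For the inequality alone, the inclusion $\subseteq$ would already suffice.

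The second step applies Markov's inequality to $M \ge 0$:
\[
\Pr(M \ge \alpha) \le \frac{\mathbb{E}[M]}{\alpha} = \frac{R_b}{\alpha}.
\]
Combining this with the event identity gives the stated bound.

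The only point needing care is measurability: $e_b(\mathbf{W})$ should be understood as a version of the conditional probability given $b(\mathbf{W})$, so that $M$ is a well-defined random variable and $R_b$ is its expectation. With that understood, the proof is a two-line argument and has no real obstacle.

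It is also worth noting the interpretation that connects back to the discussion of $R^\star$. If $R_b$ is driven below $R^\star$ toward $0$ by direct treatment encoding, then the mass of the representation with propensity bounded away from $0$ and $1$ vanishes at rate $R_b/\alpha$. This is the sense in which \textbf{A4} fails in practice.
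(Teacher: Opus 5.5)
Your proposal is correct and matches the paper's proof: the paper also uses that $\min(e_b,1-e_b)\ge\alpha$ on the overlap set, then bounds $R_b \ge \mathbb{E}[\min(e_b,1-e_b)\,\mathbf{1}(\alpha\le e_b\le 1-\alpha)] \ge \alpha\Pr(\alpha\le e_b\le 1-\alpha)$, which is Markov's inequality written out inline. As you note, only the inclusion of the overlap event in $\{M\ge\alpha\}$ is needed, not the full event identity.
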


 For a fixed margin $\alpha$, the fraction of documents with propensity scores in the overlap region $[\alpha, 1{-}\alpha]$ is at most $R_b / \alpha$. The bound is informative when the representation is highly predictive of $T$, i.e., when $R_b \ll \alpha$. As $b(\mathbf{W})$ becomes increasingly able to separate treated from control documents, $R_b \to 0$ and overlap vanishes.
\begin{corollary}[Exact encoding]
\label{cor:exact}
If $f(b(\mathbf{W})) = T$ for some $f$, then $R_b = 0$ and $e_b(\mathbf{W}) \in \{0,1\}$ with probability 1.
\end{corollary}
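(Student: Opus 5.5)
The plan is to argue directly from the definitions, using the tower property of conditional expectation. Write $B = b(\mathbf{W})$. Since $T = f(B)$, the event $\{T=1\}$ is determined by $B$, up to null sets. Hence $\mathbf{1}\{T=1\}$ is $\sigma(B)$-measurable, and the conditional expectation of a $\sigma(B)$-measurable integrable variable is that variable itself almost surely. This gives
\[
e_b(\mathbf{W}) = \Pr(T=1 \mid B) = \mathbb{E}[\mathbf{1}\{T=1\} \mid B] = \mathbf{1}\{f(B)=1\} \quad \text{a.s.},
\]
so $e_b(\mathbf{W}) \in \{0,1\}$ with probability one. This settles the second claim.

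For the first claim, substitute into \eqref{eq:bayes-error}. On every realization, $\min(e_b, 1-e_b) = \min(0,1) = 0$ almost surely, so its expectation is zero and $R_b = 0$.

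As a consistency check, Theorem~\ref{thm:approx} then gives $\Pr(\alpha \le e_b(\mathbf{W}) \le 1-\alpha) \le 0/\alpha = 0$ for every $\alpha \in (0,1/2)$. Letting $\alpha \downarrow 0$ along a countable sequence shows that $\Pr(0 < e_b < 1) = 0$, which recovers the second claim independently of the first argument.

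The only delicate step is measurability. It requires $f$ to be measurable, so that $f(B)$ is $\sigma(B)$-measurable; I will state this as an implicit assumption. Once that is granted, the rest is a direct substitution into the definitions, and there is no real obstacle.
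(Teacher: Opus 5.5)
Your argument is correct. The paper gives no written proof of this corollary. Its placement right after Theorem~\ref{thm:approx} suggests the intended argument runs in the opposite direction from yours:
\begin{itemize}
\item $f$ predicts $T$ from $b(\mathbf{W})$ without error, and $R_b$ is described as the smallest achievable misclassification rate, so $R_b=0$;
\item Theorem~\ref{thm:approx} then gives $\Pr(\alpha \le e_b(\mathbf{W}) \le 1-\alpha)=0$ for every $\alpha$, so the propensities are degenerate.
\end{itemize}

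Your main route instead computes $e_b(\mathbf{W})=\mathbf{1}\{f(B)=1\}$ directly via $\sigma(B)$-measurability and then reads $R_b=0$ off the definition \eqref{eq:bayes-error}. This is more elementary and self-contained. It needs neither Theorem~\ref{thm:approx} nor the fact that $R_b$ lower-bounds every classifier's error, which the paper asserts but does not prove. The paper's route has the advantage of showing exact encoding as the limiting case of the overlap bound.

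One small correction: your ``consistency check'' is not independent of your first argument. The $R_b=0$ you feed into Theorem~\ref{thm:approx} was itself derived from $e_b(\mathbf{W})\in\{0,1\}$. To make it genuinely independent, obtain $R_b=0$ from the zero-error classifier. For any measurable $g$,
\[
\Pr(g(B)\neq T)=\mathbb{E}\bigl[\mathbf{1}\{g(B)=1\}(1-e_b(\mathbf{W}))+\mathbf{1}\{g(B)=0\}\,e_b(\mathbf{W})\bigr]\ \ge\ R_b ,
\]
and taking $g=f$ makes the left side zero.

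Your caveat that $f$ must be measurable is appropriate.
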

We note that the statements above refer to the true propensity scores $e_b$, which are never observed. In practice, $e_b$ must be estimated from data. The Bayes error $R_b$ is likewise an unobservable quantity, but any classifier trained to predict $T$ from $b(\mathbf{W})$ provides an upper bound. If the held-out classification error is small, then $R_b$ is at most as small, and Theorem~\ref{thm:approx} guarantees that overlap is limited.

\subsection{Masking}
\label{sec:masking}
For lexicon-based treatments, the text transformation can be constructed by deleting or replacing treatment-defining words. For non-lexical treatments, such as politeness or confidence, the same logic applies if the analyst can identify textual cues that directly define the treatment and remove them from the representation-learning input. The theory below is therefore best read as a guarantee for any transformation satisfying that $T$ is independent of the transformed text, conditional on $\mathbf{Z}$, with lexicon masking providing the cleanest implementation.

Theorem \ref{thm:approx} implies a design goal: an adjustment representation should be constructed so that $R_b$ remains bounded away from zero. 

Lexicon-based treatments arise naturally in many text-as-treatment settings and are appealing for several reasons.

Treatment status is directly measurable from the text, the assumption $D = T$ holds by construction, 
and the treatment effect has a clear actionable interpretation: a writer can impact a desired outcome via  word choice.

Recall the example in Figure~\ref{fig:trap}. Consider the reviews ``I loved this book!" and ``I hated this movie!"  Embeddings of these documents will lie in different regions of the representation space, separated by their opposing sentiment treatment status. Now suppose we mask these lexicon words, producing "I [MASK] this book!" and "I [MASK] this movie!" Embeddings can still capture confounding content but won't separate units by sentiment words alone. In the language of Theorem \ref{thm:approx}, masking keeps propensity scores bounded away from 0 and 1 by preventing $R_b$ from shrinking toward zero.

\begin{theorem}[Preserving overlap]\label{prop:masking}Let $\mathbf{W}^{-}$ be a treatment-removing transformation of $\mathbf{W}$ satisfying
\[
T \perp \mathbf{W}^{-}\mid \mathbf{Z}.
\]
Let $b_{-}(\mathbf{W}) = h(\mathbf{W}^{-})$ be any representation that depends on $\mathbf{W}$ only through $\mathbf{W}^{-}$.
Then
\[
\alpha^\star \leq P(T=1\mid b_{-}(\mathbf{W})) \leq 1-\alpha^\star.
\]
\end{theorem}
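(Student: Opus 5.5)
The plan is to write the representation-level propensity score as a conditional average of the true propensity score $e^\star(\mathbf{Z})$, and then use the fact that an average of numbers in $[\alpha^\star,1-\alpha^\star]$ stays in that interval. Throughout, write $B_- = b_-(\mathbf{W}) = h(\mathbf{W}^-)$. Since $B_-$ is a measurable function of $\mathbf{W}^-$, the tower property gives
\[
P(T=1\mid B_-) = \mathbb{E}\bigl[\,P(T=1\mid \mathbf{W}^-)\,\big|\, B_-\bigr].
\]
This reduces the problem to bounding $P(T=1\mid \mathbf{W}^-)$.

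To bound that quantity, condition additionally on $\mathbf{Z}$ and apply the tower property again:
\[
P(T=1\mid \mathbf{W}^-) = \mathbb{E}\bigl[\,P(T=1\mid \mathbf{Z},\mathbf{W}^-)\,\big|\,\mathbf{W}^-\bigr].
\]
The assumption $T\perp \mathbf{W}^-\mid \mathbf{Z}$ gives $P(T=1\mid \mathbf{Z},\mathbf{W}^-) = P(T=1\mid \mathbf{Z}) = e^\star(\mathbf{Z})$ almost surely. Combining the two displays,
\[
P(T=1\mid B_-) = \mathbb{E}\bigl[e^\star(\mathbf{Z})\mid B_-\bigr].
\]
The assumed true overlap, $\alpha^\star\le e^\star(\mathbf{Z})\le 1-\alpha^\star$ almost surely, then passes through the conditional expectation by monotonicity. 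This yields the claimed bounds almost surely.

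The main obstacle is the measure-theoretic care in the conditional independence step. We must read $T\perp\mathbf{W}^-\mid\mathbf{Z}$ as $P(T=1\mid \mathbf{Z},\mathbf{W}^-)=P(T=1\mid\mathbf{Z})$ almost surely, and then check that $\sigma(B_-)\subseteq\sigma(\mathbf{W}^-)$ so that iterated conditioning is legitimate. Both are standard. We will also note that the conclusion holds only almost surely, that is, for $P_{B_-}$-almost every value of the representation; this is the sense in which A4 is required.

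Finally, we will remark on what the argument shows. Deletion masking for bag-of-words or topic representations satisfies the hypothesis by construction, so any $h$, however flexible, inherits the true overlap margin $\alpha^\star$. It follows that $R_{b_-}\ge \alpha^\star$, and the collapse described in Theorem~\ref{thm:approx} cannot occur.
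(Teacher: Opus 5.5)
Your proof is correct and is essentially the paper's argument: you write $P(T=1\mid b_-(\mathbf{W}))$ as $\mathbb{E}[e^\star(\mathbf{Z})\mid b_-(\mathbf{W})]$ using the tower property and the conditional independence, then pass the overlap bounds through the conditional expectation by monotonicity. The only difference is cosmetic: the paper first transfers the conditional independence from $\mathbf{W}^-$ to $b_-(\mathbf{W})$ and then takes a single tower step, whereas you keep the conditioning on $\mathbf{W}^-$ and use nested tower steps.
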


In particular, every document satisfies overlap with margin $\alpha^{\star}$, so the failure mode identified by Theorem~\ref{thm:approx} is structurally avoided; the proof is in Appendix~\ref{app:th3}

We now define the simplest transformation: complete deletion of lexicon words.  Suppose treatment is determined by a known lexicon $\mathcal{L}$, so that there exists some $g$ such that $T = g(\mathbf{W} \cap \mathcal{L})$, and let $\mathbf{W}^{\mathrm{del}} = \mathbf{W} \setminus \mathcal{L}$ denote the document with lexicon words removed. 
Specializing the above theorem to $W^{del}$, we rely heavily on the assumption $T \perp \mathbf{W}^{\mathrm{del}} \mid \mathbf{Z}$. In unmasked text, this is not true: the presence of words in $\mathcal{L}$ encodes treatment status. After deletion, $\mathbf{W}^{\mathrm{del}}$ no longer directly encodes $T$, and the assumption holds whenever $\mathbf{Z}$ captures any remaining treatment-correlated variation in non-lexicon text. This is the relevant case for bag-of-words topic modelling in Section \ref{sec:topic}.

Deletion masking gives a clean theoretical guarantee but hides information, such as  word positions, context around treatment words, and document length, that may be relevant for adjustment. For large language model representations, we therefore adopt a replacement masking strategy. Define $\mathbf{W}^{\mathrm{mask}}$ as the document with lexicon words \textit{replaced} by \texttt{[MASK]} tokens rather than deleted. It is no longer clear that  $T \perp \mathbf{W}^{\mathrm{mask}} \mid \mathbf{Z}$ should hold.

Replacement masking is best understood as a training-time intervention. Let $\theta_M$ denote parameters learned from masked training inputs, and let $b_{\theta_M}(\cdot)$ denote the resulting representation map. We define the mask-trained representation evaluated on the original (unmasked) text,
$b_M(\mathbf{W}) := b_{\theta_M}(\mathbf{W}),$
with induced propensity score $e_M(\mathbf{W}) := \Pr(T = 1 \mid b_M(\mathbf{W}))$. We evaluate on the unmasked text as our aim is not to forget that lexicon words exist. Rather, we aim to prevent the encoder from learning a deterministic shortcut from those tokens to treatment status during training while preserving as many adjustment-relevant textual attributes as possible at evaluation.

\begin{proposition}[Full text evaluation can preserve overlap.]
\label{prop:eval-full}
There exist data-generating processes where $b_M(\mathbf{W}^{\mathrm{mask}})$ violates overlap while $b_M(\mathbf{W})$ satisfies it.
\end{proposition}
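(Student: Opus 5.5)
The plan is to prove the existence claim with an explicit, minimal data-generating process in which replacement masking leaves a residual treatment signal (the number or position of \texttt{[MASK]} tokens), while the mask-trained encoder $b_{\theta_M}$, applied to unmasked text, maps treated and control documents to overlapping representations. I will fix a scalar confounder $\mathbf{Z}\in\{0,1\}$ with $e^\star(z)\in[\alpha^\star,1-\alpha^\star]$, a lexicon $\mathcal{L}=\mathcal{L}_1\cup\mathcal{L}_0$ of positive and negative words, and documents consisting of one topic word determined by $\mathbf{Z}$ together with lexicon words. The words follow a length asymmetry: treated documents ($T=1$) contain exactly two lexicon words from $\mathcal{L}_1$, and control documents contain exactly one from $\mathcal{L}_0$. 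This kind of structure is realistic, since positive reviews may be more effusive.

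First I will show that $b_M(\mathbf{W}^{\mathrm{mask}})$ can violate overlap. Under replacement masking the count of \texttt{[MASK]} tokens equals $2$ if $T=1$ and $1$ if $T=0$. I will choose the encoder family to be a simple map such as $b_\theta(\mathbf{w})=(\text{topic indicator},\ \sum_{j}\theta_{w_j})$, where $\theta_{w}$ is a learned scalar weight on each token and the sum runs over the document's tokens. I will exhibit a $\theta_M$ that is a valid outcome of training on masked inputs, for instance a minimizer of a reconstruction or outcome loss in which $Y$ depends on document length, with $\theta_{\texttt{[MASK]}}=1$. Then $b_M(\mathbf{W}^{\mathrm{mask}})$ reveals the mask count, so $f(b_M(\mathbf{W}^{\mathrm{mask}}))=T$, and Corollary~\ref{cor:exact} gives $e\in\{0,1\}$ almost surely.

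Second I will show that $b_M(\mathbf{W})$ satisfies overlap. Lexicon words never appear in the masked training inputs, so the training loss does not depend on $\theta_w$ for $w\in\mathcal{L}$. These weights are therefore left at their initialization, which I will take to be $0$ (weights of unseen embeddings are fixed at $0$, or equivalently the loss carries weight decay). On unmasked text the sum then ignores all lexicon tokens, and $b_M(\mathbf{W})$ reduces to a function of the topic word only, i.e. of $\mathbf{Z}$. Hence $\Pr(T=1\mid b_M(\mathbf{W}))=e^\star(\mathbf{Z})\in[\alpha^\star,1-\alpha^\star]$; this is Theorem~\ref{prop:masking} applied with $\mathbf{W}^-$ taken to be the topic word, for which $T\perp\mathbf{W}^-\mid\mathbf{Z}$ holds trivially.

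The main obstacle is making the claim that $\theta_M$ arises from training non-arbitrary. It must not look as though we simply chose weights to make the conclusion hold. I will handle this by stating the training objective explicitly and arguing two points. First, $\theta_{\texttt{[MASK]}}\neq 0$ at the optimum, for example because the outcome or reconstruction target depends on length. Second, the gradients with respect to lexicon weights vanish identically on masked data, so those weights retain their initialization. If this feels contrived, a fallback is to state the proposition for any $\theta_M$ in the set of minimizers, noting that this set contains the described solution; an existence statement only requires one such data-generating process together with one admissible training outcome.
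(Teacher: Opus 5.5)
Your construction is correct and is essentially the paper's. Both build a DGP in which the pattern of \texttt{[MASK]} tokens reveals $T$; the paper uses one lexicon token for treated documents and none for controls, with $\mathbf{Z}\perp T$ so that the full-text propensity is exactly $1/2$. Both then use an encoder that responds to \texttt{[MASK]} but is blind to lexicon tokens it never saw in training, so $b_M(\mathbf{W})$ is a function of $\mathbf{Z}$ alone.

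The one real difference is that you realize this encoder as an explicit bag-of-embeddings model whose lexicon weights get zero gradient, where the paper simply stipulates these invariances. To justify a nonzero \texttt{[MASK]} weight, the propensity cross-entropy $L_{\mathrm{CE}}$ on masked inputs is a cleaner driver than a length-dependent outcome. If the outcome head receives $t$, as the paper's does, then under your weight-decay option the outcome loss alone would push that weight to zero.
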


\noindent Intuitively, when the mask pattern itself reflects $T$, for instance, if treated documents contain one lexicon token and controls contain none, evaluating on $\mathbf{W}^{\mathrm{mask}}$ exposes that signal directly. A mask-trained encoder evaluated on $\mathbf{W}$, by contrast, can fall back on non-lexicon content and avoid the lexicon-to-treatment shortcut. Appendix~\ref{app:p1} gives an explicit DGP and proof.

Because the strict conditional independence assumption no longer holds, we cannot guarantee that $e_M(\mathbf{W})$ remains in $[\alpha^{\star}, 1 - \alpha^{\star}]$. Appendix~\ref{app:replacement-bound} formalizes a sensitivity bound showing that overlap degrades smoothly with residual leakage rather than collapsing as in the unmasked case.
The bound is not directly estimable, but it makes precise the way that replacement masking is a controlled relaxation of deletion masking. In practice, we assess residual leakage by reporting the fraction of estimated propensity scores in $[\alpha,\, 1 - \alpha]$  for a chosen margin $\alpha$.

We implement the masking framework in two settings of increasing representational complexity. Section~\ref{sec:topic} uses topic models, whose bag-of-words structure is a natural fit for deletion masking: removing lexicon words leaves no trace of their count or position, so Theorem~\ref{prop:masking} applies directly and gives a clean proof of concept. Section~\ref{sec:llm} moves to fine-tuned large language models, where confounding can operate through word order, syntax, and contextual semantics that topic proportions do not capture. Because deletion would destroy this structure, we adopt replacement masking.

\section{Topic Modeling}
\label{sec:topic}

Topic models provide low-dimensional, interpretable summaries of text and remain widely used in applied causal inference with text \citep{falav207exercise, roberts2020matching, sridhar2019debates, ahrens-etal-2021-bayesian}, and analysts who already use LDA can adopt the masking pipeline with minimal modification.

In the spirit of the causal amortized topic model \citep{veitch2020adapting}, we assume documents contain $K + 2$ latent topics. Topics $K+1$ and $K + 2$ map to treatment and control words, respectively; presence or absence of these words indicate treatment status. The remaining $K$ topics are potential confounders. Outcomes $Y$ are assumed to be some function of the $K + 2$ topics. {In the notation of Section~\ref{sec:masking}, the $K$ confounding topics correspond to confounders $\mathbf{Z}$, while topics $K+1$ and $K+2$ encode treatment status $T$. The document's words $\mathbf{W}$ are observed; $T$ and $\mathbf{Z}$ must be recovered.}

The topic model is as follows: for each document $i \in \{1, \ldots, N\}$,  draw treatment label $D_i \sim \text{Bern}(p)$ and document length $L_i \sim \text{Pois}(\mathcal{L})$. A topic proportion parameter is drawn dependent on treatment status $\boldsymbol{\theta}_i \sim \text{Dirichlet}(D_i\alpha_T + (1-D_i)\alpha_C)$ where $\alpha_T \neq \alpha_C$ are topic distribution parameters. For each word $l \in \{1, \ldots, L_i\}$, a topic $q_{il}$ is drawn $q_{il} \sim \text{Multinomial}(\boldsymbol{\theta}_i$). Word $w_{il}$ is then drawn from a corresponding topic distribution $\boldsymbol{\beta}_{q_{il}}$. A plate diagram is provided in \ref{app:LDA}.

\subsection{Estimation Procedure}
\label{sec:estimation}

Fitting LDA with $K+2$ topics to learn confounders appears natural, but two of the resulting proportions encode treatment status, making $T$ perfectly recoverable and collapsing overlap by Corollary~\ref{cor:exact}. Our pipeline identifies treatment-defining words via domain knowledge or by inspecting the top words of treatment/control topics from an initial $K+2$ fit, deletes them from each document's bag-of-words, and fits LDA with $K$ topics on the masked text. Because LDA is bag of words, $T \perp \mathbf{W}^{\mathrm{del}} \mid \mathbf{Z}$ holds naturally, and Theorem~\ref{prop:masking} provides that overlap is preserved. To avoid bias, models are fit on a training split and the ATE is estimated on a test set \citep{egami2018makecausalinferencesusing}.
\subsection{Simulations}

\label{sec:sims}
Since true causal effects are unobserved in real-world data, we use semi-synthetic simulations to establish a ground truth. 
In each iteration of a simulation, we generate $N$ documents according to the model above, conditioning on treatment or control words being present in each document. Using the drawn topic values $q_{il}$, we compute the true topic proportions for each document $i$ given by vector ${\bz_i} \in [0, 1]^{K+2}$. With true treatment effect $\tau$ and coefficients $\boldsymbol{\gamma} \in \mathbb{R}^{K+2}$, we generate a document's outcome $y_i \sim N(\tau D_i+ \boldsymbol{\gamma}^{T}{\bz_i}, \sigma^2)$. 

We compute several ATE estimates: (1) \textbf{naive}, the difference in means between treatment and control groups; (2) \textbf{or\_full}, \textbf{or\_leace}, and \textbf{or\_masked}, outcome regression estimates; (3) \textbf{ipw\_full},  \textbf{ipw\_leace}, and \textbf{ipw\_masked}, Hájek IPW estimates; and (4) \textbf{aipw\_full}, \textbf{aipw\_leace} and \textbf{aipw\_masked}, AIPW estimates. Appendix~\ref{app:estimators} gives the full form of each estimator. The \textbf{*\_full} estimators use confounders from fitting LDA with $K + 2$ topics, the \textbf{*\_masked} estimators use our pipeline, and the  \textbf{*\_leace} estimators apply LEACE to the full $K+2$ proportions to remove linear treatment signal before outcome and propensity modeling.
Propensity scores are estimated by logistic regression and winsorized to $[0.1, 0.9]$ following \citet{crump2009dealing}. We mask treatment/control words  identified from the top words of an initial $K+2$-topic LDA fit, then refit LDA with $K$ topics on the masked text using an 80/20 train/test split. Implementation details are in Appendix~\ref{app:all_lda}.

\textbf{Parameters. }
This section presents results from one illustrative set of parameters. Document generation ($\beta, \alpha_T, \alpha_C$) is developed to reflect realistic data based on a data set of Associated Press articles in R \texttt{topicmodels} package \citep{grun2011tm}. Full details, plus simulations using more parameters and topic misspecification stress-testing, are in Appendix \ref{app:all_lda}. 



\begin{table}[h]
\centering
\small
\caption{MSE, bias, and variance of ATE estimators (true $\tau = 0.5$) across 1,000 simulations.}
\label{tab:decomp}
\begin{tabular}{@{}l c ccc ccc ccc@{}}
\toprule
 & \textbf{naive}
 & \multicolumn{3}{c}{\textbf{IPW}}
 & \multicolumn{3}{c}{\textbf{OR}}
 & \multicolumn{3}{c}{\textbf{AIPW}} \\
\cmidrule(lr){3-5}\cmidrule(lr){6-8}\cmidrule(lr){9-11}
          &        & full   & masked          & leace  & full   & masked          & leace  & full   & masked          & leace  \\
\midrule
MSE       & .01462 & .01464 & \textbf{.00300} & .01462 & .00606 & {.00287} & .01462 & .00697 & \textbf{.00300} & .01462 \\
Bias      & .110   & .0895  & \textbf{.00853} & .110   & .0189  & \textbf{.00927} & .110   & .0198  & \textbf{.00891} & .110   \\
Var       & .00258 & .00664 & {.00293} & .\textbf{00258} & .00570 & {.00278} & .\textbf{00258} & .00658 & {.00293} & .\textbf{00258} \\
\bottomrule
\end{tabular}
\end{table}

\textbf{Results. }Results across 1,000 simulation iterations are summarized in Table~\ref{tab:decomp}. Masking reduces MSE, bias, and variance for every estimator. For OR and AIPW the improvement in MSE comes mostly from reduced variance, while for IPW both bias and variance fall sharply.  The full $K+2$ IPW  estimator performs nearly as poorly as the naive baseline. Iteration-level comparisons (Appendix \ref{app:extra_lda} Figure~\ref{fig:rmse-app}) confirm the gain is consistent across runs: the masked estimator achieves lower squared error than the full estimator in 64\%, 62\%, and 83\% of iterations for OR, AIPW, and IPW respectively.
The LEACE estimators recover the naive estimator across all three adjustment strategies. LEACE erases all linear treatment signal from the $K+2$ topic proportions, including necessary adjustment information.

An overlap diagnostic is the empirical counterpart of Theorem~\ref{thm:approx}. The full $K+2$ representation produces propensity scores outside $[0.1, 0.9]$ for nearly every document in nearly every iteration: the mean fraction of documents with extreme propensities across simulations is 99.98\%, and even in the most favorable iteration, 96.2\% of documents are extreme. Masking reduces this fraction to 1.7\% on average.  LEACE drives it to essentially zero at the cost of stripping the representation of adjustment. Table \ref{tab:overlap-app} in Appendix \ref{app:extra_lda} provides a summary of the distributions of extreme propensity scores.

\section{Large Language Models}
\label{sec:llm}

When confounding operates through context, syntactic structure, or semantic content not captured by topic proportions, a richer representation than topic mixtures is needed. Following prior work, we therefore implement the masking framework with a fine-tuned large language model \citep{veitch2020adapting, pryzant-etal-2021-causal, gui2023causalestimationtextdata}.

\subsection{Architecture and Masked Input}
We fine-tune DistilBERT: Each document $\mathbf{W}_i$ is tokenized into a sequence of subword tokens (using \texttt{DistilBertTokenizer}) and prefixed by the special \texttt{[CLS]} token \citep{sanh2019}. Following standard BERT practice, we use the final-layer hidden state at the \texttt{[CLS]} position as the document embedding.
Let $\mathbf{W}_i^{\mathrm{mask}}$ denote the document obtained by replacing every token belonging to a treatment- or control-defining lexicon entry with the \texttt{[MASK]} token to leverage the encoder's existing masked-language-model capabilities. Multi-token lexicon entries are masked together. 

During fine-tuning, the encoder is applied to the masked input, yielding the representation $b_\theta^{M_i} := b_\theta(\mathbf{W}_i^{\mathrm{mask}})$. This representation feeds three heads: an outcome head $\widetilde{Q}\!(t, b_\theta^{M_i})$, a propensity head $\widetilde{g}\!(b_\theta^{M_i})$, and a masked-language-modeling (MLM) head. The outcome and propensity heads are used for causal estimation, while the MLM head preserves general language understanding.

The masking intervention is applied only to the text-derived representation. When observed non-text covariates $C_i$ are available, we pass them directly to the outcome and propensity heads by replacing $b_\theta^{M_i}$ with $(b_\theta^{M_i}, C_i)$ \citep{pryzant-etal-2021-causal, gui2023causalestimationtextdata}. We suppress $C_i$ below as the overlap issue studied here arises from learned text representations, not ordinary observed covariates.

\textbf{Multi-task loss. }
The \textbf{outcome head }$\widetilde{Q}\!(t, b_\theta^{M_i})$ estimates $ \mathbb{E}\!\left[Y \mid T=t,\, b_\theta(\mathbf{W}^{\mathrm{mask}})\right]$
for $t \in \{0,1\}$, with loss $L_{\mathrm{out}}$ as mean squared error for continuous outcomes and cross-entropy for binary outcomes. The \textbf{propensity head }$\widetilde{g}\!(b_\theta^{M_i})$ estimates $\Pr\!\left(T=1 \mid b_\theta(\mathbf{W}^{\mathrm{mask}})\right)$
and contributes two loss terms playing complementary roles. The first, $L_{\mathrm{CE}}$, is the cross-entropy loss for predicting $T_i$ from $b_\theta^{M_i}$. This term is necessary because genuine confounders are generally treatment-predictive: if the representation were never encouraged to learn treatment signal, it may not learn adjustment-relevant information. The second term is a soft overlap penalty that activates only when the estimated propensity leaves a target overlap band $[\alpha,1-\alpha]$: $
L_{\mathrm{penalty}}
=
\mathbb{E}\!\left[
(\alpha-\widetilde{g})^2 \mathbb{I}\{\widetilde{g}<\alpha\}
+
(\widetilde{g}-(1-\alpha))^2 \mathbb{I}\{\widetilde{g}>1-\alpha\}
\right].
$

The \textbf{MLM head} contributes $L_U$ by randomly masking non-lexicon tokens and training the encoder to recover them. We exclude already masked lexicon tokens from this auxiliary masking procedure. The full training loss combines all four terms: $$ L(w_i; \theta) = \lambda_{Q}\, L_{\text{out}} + L_{\text{CE}} +  \lambda_{P}\, L_{\text{penalty}} + L_U.$$

The weight $\lambda_Q$ is auto-calibrated at initialization so that the outcome and cross-entropy losses contribute equally to the gradient. The penalty weight $\lambda_P$ is tuned adaptively during training, targeting a 90\% validation overlap rate.\footnote{90\% is chosen by analogy to St\"{u}rmer trimming \citep{sturmer2021propensity}.} Full update rules are in Appendix~\ref{app:lambda-details}.

\textbf{Training and Estimation. }
We use $80\%$/$10\%$/$10\%$ training/validation/test splits; fine-tuning uses only the training split. Optimization uses AdamW with learning rate $2 \times 10^{-5}$, batch size $32$, and up to $20$ epochs. At the end of each epoch we record  validation losses and select the epoch with the minimum value.

Given a fitted model at the selected checkpoint, the ATE is estimated by outcome regression on the held-out test split:
$\widehat{\tau}^{\,\text{mask}} \;=\; \frac{1}{n_{\text{test}}} \sum_{i \in \text{test}} \big[\widetilde{Q}(1,b_\theta(\mathbf{W}_i)) \;-\; \widetilde{Q}(0,b_\theta(\mathbf{W}_i))\big].$

\textbf{Comparators. }We compare against two DistilBERT-based estimators. \textbf{TextCause} \citep{pryzant-etal-2021-causal} trains outcome heads with an MLM loss  but does not use a treatment-prediction loss. We therefore use it as a baseline for outcome-focused representation learning. \textbf{TI-estimator} \citep{gui2023causalestimationtextdata} jointly trains outcome and propensity heads with an MLM loss and uses the learned outcome summaries in AIPW and outcome regression estimators. It therefore tests whether standard treatment-aware representation learning is sufficient when the treatment-defining lexicon remains visible to the encoder. While \citet{gui2023causalestimationtextdata} recommend their AIPW estimator, we also report outcome regression as it is most similar to our approach. Implementation details are in Appendix~\ref{app:llm-comp}.

\subsection{Simulations: Amazon Data}

We evaluate the LLM-based masking procedure on a semi-synthetic dataset built from real Amazon reviews, where the underlying text is natural language rather than a bag-of-words draw. We use reviews from the Health and Personal Care category of the Amazon Reviews corpus \citep{hou2024bridging}. To obtain a cleanly lexicon-identifiable treatment, we retain only reviews whose sentiment words come from {exclusively} a positive or a negative lexicon \citep{liu2010sentiment}, yielding $\approx 13,500$ reviews ($87\%$ positive). The unit of analysis is the product, with a single concatenated mega-review 
reflecting the reader-facing setting in which one evaluates the body of reviews for a single item. Treatment $T_i$ is the lexicon class of product $i$'s mega-review.

\textbf{Confounders and outcome generation. }
We extract latent confounders 
from the masked text representing the product categories and usage experience (details and list of  semantic labels are in Appendix \ref{app:sim:confounders}. Letting
$\bZ_{i} \in \{0,1\}^K$  indicate whether each covariate is present in document $i$. 

Thus, the confounding structure reflects real review content rather than a purely synthetic signal layered on top of the text. Outcomes are generated as
$y_i \sim f(\beta_0 + \beta_T T_i + \boldsymbol{\Gamma}^\top \bZ_i)$,
where $f$ is a logit link (binary) or Gaussian mean (continuous). Confounding strength is controlled by $\boldsymbol{\Gamma}$; we report a low and high setting for each outcome type.

\paragraph{Results.}

\begin{table}[h]\centering\small

\begin{tabular}{@{}>{}c l cccc cccc@{}}
\toprule
& & \multicolumn{4}{c}{\textbf{Low Confounding}} & \multicolumn{4}{c}{\textbf{High Confounding}} \\
\cmidrule(lr){3-6}\cmidrule(lr){7-10}
& Estimator & MSE & Bias & Var & PS & MSE & Bias & Var & PS \\
\midrule
\multirow{7}{*}{\rotatebox[origin=c]{90}{\textit{Binary }}}
& True & .0000 & .0001 & .0000 & -- & .0000 & .0006 & .0000 & -- \\
& Naive & .0230 & .1514 & .0001 & -- & .0553 & .2349 & .0001 & -- \\
\cmidrule(l){2-10}
& Masking  & .0038 & .0287 & .0030 & 96.0\% & .0115 & .1000 & .0015 & 97.7\% \\
& TextCause & .0237 & .1531 & .0002 & -- & .0565 & .2371 & .0003 & -- \\
& TI (Trimmed) & .0239 & .0509 & .0213 & 2.2\% & .0280 & .0491 & .0256 & 2.2\% \\
& TI (Winsorized) & .0363 & .1693 & .0076 & 2.2\% & .0436 & .1908 & .0072 & 2.2\% \\
& TI (Out. Reg.) & .0434 & .1887 & .0078 & 2.2\% & .0523 & .2124 & .0072 & 2.2\% \\

\midrule
\multirow{7}{*}{\rotatebox[origin=c]{90}{\textit{Continuous}}}
& True & .0229 & .0186 & .0226 & -- & .0229 & .0186 & .0226 & -- \\
& Naive & .3669 & .6058 & .0000 & -- & 1.1238 & 1.0601 & .0000 & -- \\
\cmidrule(l){2-10}
& Masking  & .0195 & -.0425 & .0177 & 87.0\% & .0442 & -.0674 & .0397 & 87.9\% \\
& TextCause & .2746 & .5219 & .0021 & -- & .6132 & .7779 & .0081 & -- \\
& TI (Trimmed) & .0336 & -.1072 & .0221 & 2.9\% & .0615 & -.1614 & .0355 & 5.5\% \\
& TI (Winsorized) & .0248 & .1420 & .0047 & 2.9\% & .0422 & .1626 & .0157 & 5.5\% \\
& TI (Out. Reg.) & .0380 & .1820 & .0049 & 2.9\% & .0497 & .1789 & .0177 & 5.5\% \\
\bottomrule
\end{tabular}
\caption{Binary (top) and continuous (bottom) outcomes on semi-synthetic Amazon reviews: MSE, bias, variance, and median fraction of held-out test documents with estimated propensity scores (PS) in $[0.1, 0.9]$ across 100 simulations for models with a PS head during training.\label{tab:results_adaptive_lambda}}
\end{table}

Table~\ref{tab:results_adaptive_lambda} compares the adaptive masking estimator against the naive difference-in-means estimator, TextCause, and TI-estimator across binary and continuous outcomes with low and high confounding. The \textbf{naive} estimator is the unadjusted differences in means between the treatment and control groups; it confirms that the outcomes are meaningfully confounded as its MSE increases with confounding strength. The line labeled ``\textbf{true}" calculates the average difference between the simulated 
treatment and control outcomes.
This retains noise from random outcome generation. 
The PS column reports the median fraction of held-out test documents with estimated propensity scores in $[0.1,0.9]$ across simulation repetitions.

\textbf{TextCause} closely tracks the naive estimator.
Without a treatment-prediction loss, the representation is not encouraged to retain treatment-predictive confounding information.  This is analogous to the LEACE failure mode in Section~\ref{sec:topic}. 
The \textbf{TI-estimator} suffers severe overlap violations: the median fraction of propensities in $[0.1, 0.9]$ ranges from 2–6\%, reflecting direct treatment encoding from the text. The propensity score distributions (Figure~\ref{fig:prop-grid}, Appendix~\ref{app:llm-comp}) show near-complete separation between treatment arms. We report trimmed and winsorized variants as the original AIPW produces degenerate estimates. The performance of TI is unstable; in some settings, MSE improves, while in others it remains comparable to or worse than the naive estimator. The \textbf{masking} estimator gives the strongest overall performance. The masking estimator achieves the lowest MSE in three of four settings and is competitive in the fourth (continuous high confounding). Replacement masking with adaptive overlap regularization retains confounding information without collapsing treated and control documents into separated regions. Ablation studies (Appendix~\ref{app:ab-mask-comp}) show that token masking and $L_{\textrm{penalty}}$ alone each provide improvements over competitors. This analysis of embedding similarity suggests that the penalty encourages the encoder to represent lexicon and [MASK] tokens similarly. 
Further ablation studies show that the mask-trained encoder evaluated on full text outperforms evaluation on masked text (Appendix~\ref{app:eval-mask}).

\section{Real Data Application}
We apply the LLM pipeline to complaints filed with the
Consumer Financial Protection Bureau (CFPB)\footnote{https://www.consumerfinance.gov/data-research/consumer-complaints/search/}, using the application data
of \citet{pryzant-etal-2021-causal} and
\citet{gui2023causalestimationtextdata}. The treatment is  hedging vs. confidence in a complaint; the outcome is if the company gave a timely response.

We restrict the CFPB database to complaints with non-empty narratives over a three-year window. Following \citet{mozer2020matching}, we pair each untimely complaint with its most similar timely complaint 
We keep the 4,000 highest-similarity pairs (8,000 complaints). Treatment $T_i$ is a lexicon-based confidence indicator built from the \texttt{politeness} package \citep{Yeomans2018}: complaints in the top quartile of hedge use relative to truth-intensifier use are treated; bottom-quartile are controls. Following \citet{pryzant-etal-2021-causal, gui2023causalestimationtextdata}, we include a covariate $C_i$ indicating if the complaint is about a Mortgage or Bank account/service. We mask hedge terms and truth intensifiers and fit the LLM from Section~\ref{sec:llm}, using 100 bootstrap train/validation/test splits for variance estimation and cross-validation for epoch selection.

\begin{wraptable}{r}{0.5\textwidth}
\vspace{-2em}
\centering
\small
\caption{CFPB effect of hedging on timely response.}
\label{tab:cfpb}
\begin{tabular}{lccr}
\toprule
Estimator & $\hat{\tau}$ & 95\% CI & PS \\
\midrule
Masking & $-$0.006 & ($-$0.035, 0.023) & 100\% \\
TextCause & $-$0.022 & ($-$0.046, $-$0.005) & -- \\
TI (Trim) & 0.003 & ($-$0.074, 0.075) & 71\% \\
TI (Winsor.) & $-$0.017 & ($-$0.096, 0.052) & 71\% \\
\bottomrule
\end{tabular}
\vspace{-1em}
\end{wraptable}

Table~\ref{tab:cfpb} summarizes the results. The naive difference in means shows a small negative effect of hedging ($\hat{\tau} = -0.033$). The masking estimator brings this near zero ($\hat{\tau} = -0.006$) with a confidence interval comfortably containing zero, and all propensity scores remain in $[0.1, 0.9]$. TextCause is the only estimator whose interval excludes zero; this apparent finding may reflect inadequate adjustment. The TI-estimator agrees directionally with masking but exhibits substantial overlap violations. The agreement between masking and TI on a null effect lends credibility to the conclusion that hedging has no effect on timely response. \textit{This suggests the interaction of two known opposing mechanisms. }Readers are known to engage more with less intellectually humble (more confident) writing \citep{katta2025llm}, while polite complaints have been found to receive faster responses \citep{pryzant-etal-2021-causal, gui2023causalestimationtextdata}.

\label{sec:real-data}
\section{Discussion and Future Work} 
\label{sec:discussion}

We identify a failure mode in causal inference with text: when treatment is encoded in the document, representations learned from the full text can induce overlap violations. We propose masking as a remedy that removes the lexical shortcut defining treatment that preserves information needed for adjustment. This distinguishes masking from approaches that erase all treatment-predictive information, which can restore overlap at the cost of discarding legitimate confounding signal.

\textbf{Limitations. } Several limitations point to future work. First, the strongest theoretical guarantee applies to deletion masking in bag-of-words settings; replacement masking for LLMs is necessarily weaker. Sharper diagnostics for residual treatment leakage would be valuable. Second, while they are found in many applications and are computationally simple, lexicon-based treatments are restrictive. Extending masking to phrase-level or model-discovered treatment definitions is an important next step, as is understanding how imperfect treatment dictionaries affect identification. Third, treatment-defining words may also carry adjustment-relevant information; our train-on-masked, evaluate-on-full design mitigates but does not eliminate this tension. Despite these limitations, the core message is clear: causal adjustment with text requires representations that encode enough non-treatment structure to control confounding without encoding the lexical signal that defines the treatment itself. Masking provides a simple, practical way to enforce this separation for lexicon-based treatments.

\bibliographystyle{plainnat}
\bibliography{custom}

\appendix

\section{Proof of Theorems}
\label{app:proofs}

\subsection{Theorem 1}
\label{app:th0}

\begin{proof}
By the tower property and Assumption~A1-A3,
\begin{align*}
    \mathbb{E}_B\!\bigl[\mathbb{E}(Y \mid T=t,\, B)\bigr]
    &= \mathbb{E}_B\!\bigl[\mathbb{E}(Y(t) \mid T=t,\, B)\bigr] & &\text{(Assumption~A1)} \\
    &= \mathbb{E}_B\!\bigl[\mathbb{E}(Y(t) \mid B)\bigr] & &\text{(Assumption~A2)} \\
    &= \mathbb{E}\!\bigl[Y(t)\bigr].
\end{align*}
Assumption A3 ensures that $\mathbb{E}[Y | T = t, B]$ is well-defined for $t \in \{0, 1\}$ so that the outer expectation exists.
The result follows by taking $t = 1$ and $t = 0$ and subtracting.
\end{proof}

\subsection{Theorem 2}
\label{app:th1}

\begin{proof}
Let $A_{\alpha} = \{\mathbf{w} : \alpha \leq e_b(\mathbf{w}) \leq 1 - \alpha\}$. On $A_{\alpha}$, $\min(e_b(\mathbf{W}), 1 - e_b(\mathbf{W})) \geq \alpha$. Thus, we have 
\begin{align*}
    R_b & = \mathbb{E}_{\mathbf{W}}\!\big[\min\!\big(\Pr(T{=}1 \mid b(\mathbf{W})),\; 1 - \Pr(T{=}1 \mid b(\mathbf{W}))\big)\big] \\
    & = \mathbb{E}_{\mathbf{W}}\!\big[\min\!\big(e_b(\mathbf{W}), 1 - e_b(\mathbf{W})\big)\big] \\
    & \geq \mathbb{E}_{\mathbf{W}}\!\big[\min\!\big(e_b(\mathbf{W}), 1 - e_b(\mathbf{W})\big) \cdot \mathbf{1}(\mathbf{W} \in A_{\alpha})\big] \geq \alpha \cdot \Pr(\mathbf{W} \in A_\alpha). 
\end{align*}
\end{proof}

\subsection{Theorem 4}
\label{app:th3}
\begin{proof}
{Since $b_{-}(\mathbf{W})$ is a function of $\mathbf{W}^{-}$, the assumption
$T \perp \mathbf{W}^{-} \mid \mathbf{Z}$ implies
$T \perp b_{-}(\mathbf{W}) \mid \mathbf{Z}$, so
$$
\Pr(T{=}1 \mid \mathbf{Z},\, b_{-}(\mathbf{W}))
\;=\;
\Pr(T{=}1 \mid \mathbf{Z})
\;=\;
e^{\star}(\mathbf{Z}).
$$
By the tower property,
\begin{align*}
e_{b_{-}}(\mathbf{W})
&=\; \Pr(T{=}1 \mid b_{-}(\mathbf{W}))\\
&=\; \mathbb{E}_{\mathbf{Z}}\!\left[
    \mathbb{E}\!\left[
        T \,\big|\, \mathbf{Z},\, b_{-}(\mathbf{W})
    \right]
    \,\big|\,
    b_{-}(\mathbf{W})
\right] \\
&=\; \mathbb{E}_{\mathbf{Z}}\!\left[
    e^{\star}(\mathbf{Z})
    \,\big|\,
    b_{-}(\mathbf{W})
\right].
\end{align*}
By the overlap assumption,
$e^{\star}(\mathbf{Z}) \in [\alpha^{\star}, 1-\alpha^{\star}]$,
so
$e_{b_{-}}(\mathbf{W}) \in [\alpha^{\star}, 1-\alpha^{\star}]$.
}

\end{proof}

\subsection{Proposition 5}
\label{app:p1}

\noindent\emph{Construction.} Let $\mathbf{Z}, T \sim \mathrm{Bernoulli}(1/2)$ independently, so latent overlap holds. Suppose treated documents contain exactly one lexicon token and controls contain none. Let $b_M$ be a mask-trained encoder that is sensitive to \texttt{[MASK]} tokens (encountered during training) but invariant to lexicon tokens (never encountered during training). On masked text, $b_M(\mathbf{W}^{\mathrm{mask}})$ recovers the mask indicator, which equals $T$ in this DGP, so $\Pr(T = 1 \mid b_M(\mathbf{W}^{\mathrm{mask}})) \in \{0,1\}$ and overlap fails. On full text, $b_M(\mathbf{W})$ depends only on non-lexicon content, hence is a function of $\mathbf{Z}$, giving $\Pr(T=1 \mid b_M(\mathbf{W})) = 1/2$ and overlap holds. \qed

This construction provides intuition for why masked evaluation may not be safer than full-text evaluation: when the mask pattern itself provides strong information on $T$, as may be the case in real text, evaluating on $\mathbf{W}$ can be preferable. Full-text evaluation is further appealing for LLMs as lexicon words may carry adjustment-relevant outcome information once the direct path to treatment is removed during training.

\subsection{A Sensitivity Bound for Replacement Masking}
\label{app:replacement-bound}

Section \ref{sec:masking} introduces replacement masking as a controlled relaxation of deletion masking but does not formalize the sense in which residual leakage degrades overlap rather than destroying it. This appendix states and proves a sensitivity bound that makes the relaxation precise.

Recall the mask-trained representation $b_M(\mathbf{W}) := b_{\theta_M}(\mathbf{W})$ and its induced propensity score $e_M(\mathbf{W}) := \Pr(T = 1 \mid b_M(\mathbf{W}))$. To compare $e_M$ to the deletion-safe case from Theorem \ref{prop:masking}, we benchmark against the deletion-masked propensity score evaluated on documents that share a common mask-trained representation:
$$
\widetilde{e}_M^{\mathrm{del}}(\mathbf{W}) := \mathbb{E}_{\mathbf{W}'}\!\left[e_{b_{\mathrm{del}}}(\mathbf{W}') \,\big|\, b_M(\mathbf{W}') = b_M(\mathbf{W})\right].
$$
This is the average deletion-masked propensity score among documents that share the same mask-trained representation value. Because $e_{b_{\mathrm{del}}}(\mathbf{W}) \in [\alpha^{\star}, 1 - \alpha^{\star}]$ for every document by Theorem \ref{prop:masking}, the conditional expectation preserves the bounds:
$$
\widetilde{e}_M^{\mathrm{del}}(\mathbf{W}) \in [\alpha^{\star}, 1 - \alpha^{\star}].
$$
Intuitively, $\widetilde{e}_M^{\mathrm{del}}(\mathbf{W})$ is the propensity one would expect for a document with mask-trained representation $b_M(\mathbf{W})$ if the only treatment-relevant information in that representation came through confounding. Deviations of $e_M(\mathbf{W})$ from $\widetilde{e}_M^{\mathrm{del}}(\mathbf{W})$ measure treatment information present in the mask-trained representation but absent from the deletion-masked representation. We call the average absolute deviation the \emph{treatment-information surplus}:
$$
\epsilon_M := \mathbb{E}_{\mathbf{W}}\!\left[\left| e_M(\mathbf{W}) - \widetilde{e}_M^{\mathrm{del}}(\mathbf{W}) \right|\right].
$$
The quantity $\epsilon_M$ is zero exactly when the mask-trained representation carries no treatment information beyond what the deletion-masked representation provides, and increases as residual leakage grows.

\begin{theorem}[Sensitivity bound for replacement masking]
\label{thm:replacement-app}
For any $\eta > 0$,
$$
\Pr\!\left(e_M(\mathbf{W}) \notin [\alpha^{\star} - \eta,\ 1 - \alpha^{\star} + \eta]\right) \;\leq\; \frac{\epsilon_M}{\eta}.
$$
\end{theorem}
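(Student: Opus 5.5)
The argument reduces to Markov's inequality applied to the nonnegative random variable $\Delta(\mathbf{W}) := |e_M(\mathbf{W}) - \widetilde{e}_M^{\mathrm{del}}(\mathbf{W})|$, whose expectation is by definition the treatment-information surplus $\epsilon_M$. The plan is to show that the event $\{e_M(\mathbf{W}) \notin [\alpha^{\star}-\eta,\, 1-\alpha^{\star}+\eta]\}$ is contained in the event $\{\Delta(\mathbf{W}) > \eta\}$, so that the probability of the first is at most that of the second. Markov then gives the bound $\epsilon_M/\eta$.

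First, I will record the pointwise fact established just before the statement. By Theorem~\ref{prop:masking} applied to $\mathbf{W}^{\mathrm{del}}$, we have $e_{b_{\mathrm{del}}}(\mathbf{W}') \in [\alpha^{\star}, 1-\alpha^{\star}]$ for every document. A conditional expectation of a variable confined to an interval stays in that interval almost surely, so $\widetilde{e}_M^{\mathrm{del}}(\mathbf{W}) \in [\alpha^{\star}, 1-\alpha^{\star}]$ almost surely. Second comes the containment, by cases. If $e_M(\mathbf{W}) < \alpha^{\star} - \eta$, then $\Delta(\mathbf{W}) \ge \widetilde{e}_M^{\mathrm{del}}(\mathbf{W}) - e_M(\mathbf{W}) > \alpha^{\star} - (\alpha^{\star}-\eta) = \eta$. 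Symmetrically, if $e_M(\mathbf{W}) > 1-\alpha^{\star}+\eta$, then $\Delta(\mathbf{W}) \ge e_M(\mathbf{W}) - \widetilde{e}_M^{\mathrm{del}}(\mathbf{W}) > \eta$. Third, Markov's inequality gives $\Pr(\Delta(\mathbf{W}) > \eta) \le \Pr(\Delta(\mathbf{W}) \ge \eta) \le \mathbb{E}[\Delta(\mathbf{W})]/\eta = \epsilon_M/\eta$.

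The only step needing care is the first: justifying that $\widetilde{e}_M^{\mathrm{del}}$ is well defined and inherits the bounds. This requires that $e_{b_{\mathrm{del}}}$ and $b_M$ be measurable functions of $\mathbf{W}$, with the conditional expectation taken with respect to $\sigma(b_M(\mathbf{W}))$. I would state these regularity conditions explicitly and note that the bounds hold only almost surely. That suffices, because the containment argument only needs to hold outside a null set. Everything else is elementary.
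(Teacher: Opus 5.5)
Your proposal is correct and follows essentially the same route as the paper: use that $\widetilde{e}_M^{\mathrm{del}}(\mathbf{W}) \in [\alpha^{\star}, 1-\alpha^{\star}]$ to show the event $\{e_M(\mathbf{W}) \notin [\alpha^{\star}-\eta,\, 1-\alpha^{\star}+\eta]\}$ is contained in $\{|e_M(\mathbf{W}) - \widetilde{e}_M^{\mathrm{del}}(\mathbf{W})| > \eta\}$, then apply Markov's inequality to get $\epsilon_M/\eta$. Your explicit case analysis and your almost-sure and measurability remarks make explicit points the paper leaves implicit.
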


\begin{proof}
Since $\widetilde{e}_M^{\mathrm{del}}(\mathbf{W}) \in [\alpha^{\star}, 1 - \alpha^{\star}] \subseteq [\alpha^{\star} - \eta,\ 1 - \alpha^{\star} + \eta]$, any document with $e_M(\mathbf{W})$ outside $[\alpha^{\star} - \eta,\ 1 - \alpha^{\star} + \eta]$ must satisfy $\lvert e_M(\mathbf{W}) - \widetilde{e}_M^{\mathrm{del}}(\mathbf{W}) \rvert > \eta$. Therefore
$$
\bigl\{ \mathbf{W} : e_M(\mathbf{W}) \notin [\alpha^{\star} - \eta,\ 1 - \alpha^{\star} + \eta] \bigr\}
\;\subseteq\;
\bigl\{ \mathbf{W} : \lvert e_M(\mathbf{W}) - \widetilde{e}_M^{\mathrm{del}}(\mathbf{W}) \rvert > \eta \bigr\},
$$
so
$$
\Pr\!\left(e_M(\mathbf{W}) \notin [\alpha^{\star} - \eta,\ 1 - \alpha^{\star} + \eta]\right)
\;\leq\;
\Pr\!\left(\lvert e_M(\mathbf{W}) - \widetilde{e}_M^{\mathrm{del}}(\mathbf{W}) \rvert \geq \eta\right).
$$
Markov's inequality applied to the nonnegative random variable $\lvert e_M(\mathbf{W}) - \widetilde{e}_M^{\mathrm{del}}(\mathbf{W}) \rvert$ gives
$$
\Pr\!\left(\lvert e_M(\mathbf{W}) - \widetilde{e}_M^{\mathrm{del}}(\mathbf{W}) \rvert \geq \eta\right)
\;\leq\;
\frac{\mathbb{E}_{\mathbf{W}}\!\left[\lvert e_M(\mathbf{W}) - \widetilde{e}_M^{\mathrm{del}}(\mathbf{W}) \rvert\right]}{\eta}
\;=\;
\frac{\epsilon_M}{\eta}.
$$
\end{proof}

\paragraph{Interpretation and scope.}
Theorem \ref{thm:replacement-app} should be read as a sensitivity bound rather than a usable identification result. The treatment-information surplus $\epsilon_M$ is defined relative to an unobservable benchmark $\widetilde{e}_M^{\mathrm{del}}$ and is not directly estimable from data; the safe-overlap margin $\alpha^{\star}$ is also unknown. What the bound delivers is a structural statement: as long as $\epsilon_M$ is small relative to a chosen tolerance $\eta$, the fraction of documents whose propensity scores can drift outside the band $[\alpha^{\star} - \eta,\ 1 - \alpha^{\star} + \eta]$ is correspondingly small. Replacement masking can therefore degrade overlap smoothly with leakage, rather than collapsing as in the unmasked case identified by Theorem \ref{thm:approx}.

Because $\epsilon_M$ is not estimable, we treat the empirical fraction of estimated propensity scores in $[\alpha,\ 1 - \alpha]$ as the practical diagnostic. Sections \ref{sec:topic} and \ref{sec:llm} report this diagnostic for every simulation. A high in-band fraction is consistent with small $\epsilon_M$; a low fraction signals that the mask-trained representation has retained enough residual treatment information to compromise adjustment, regardless of what the formal bound would assert.

$L_{\text{penalty}}$ is the training-time counterpart to the bound here, capping how often the trained representation produces propensity scores outside the overlap region.

\paragraph{Relation to deletion masking.}
At $\epsilon_M = 0$, every document satisfies $e_M(\mathbf{W}) = \widetilde{e}_M^{\mathrm{del}}(\mathbf{W}) \in [\alpha^{\star}, 1 - \alpha^{\star}]$ almost surely, recovering Theorem \ref{prop:masking} as the limiting case of zero residual leakage. Theorem \ref{thm:replacement-app} thus situates deletion masking as the boundary point of a one-parameter family in which replacement masking trades exact overlap preservation for richer adjustment information at evaluation.

\section{Potential Outcomes and Text as Treatment}
\label{sec:appendix-po}

\subsection{Causal Text Estimands}
\label{app:textest}

In Section \ref{sec:CB}, we introduced $\tau_t$, the treatment effect of the latent treatment as experienced by the reader. However, it is also possible to investigate $\tau_d$, the treatment effect of the known document label.

Before expanding on the estimands, we first want to emphasize the difference in reader vs. writer perspective shown in Figure \ref{fig:dag}. This is a distinction unique to text-based causal inference and is line with literary theory separating language into perspectives of author intent and reader perception \citep{iser1978act, harker1988lit}. While the writer develops the text, we are interested in some eventual outcome from the reader. 

With this perspective in mind, we now present details of and differences between $\tau_d$ and $\tau_t$; refer to \citet{katta2025llm} for an even more thorough development of these ideas. 
To clearly understand the difference between these two effects, we highlight that $\tau_d$ can be thought of as the effect of exposure to \textit{documents} with some latent feature. We are intervening on \textit{D} and seeing the impact of the outcome on $Y$. In potential outcomes notation, $\tau_D$ is given by 

\begin{align*}
    \tau_d &= \mathbb{E}[Y_i(D=1) - Y_i(D=0)] \\
    & = \mathbb{E}[Y_i(T=1,Z=z) - Y_i(T=0,Z=z')].
\end{align*}

As $D$ varies from $0$ to $1$, we assume that $T$ does as well. It is also possible (and likely) for $Z$ to vary between some $z$ and $z'$. Importantly, here, $Z$ is not held fixed as we take the expectation over $i$ (units in the population). As a result, we do not need to adjust for confounding by $Z$ when computing $\tau_D$. To estimate $\tau_D$, it is enough to perform a completely randomized experiment: gather collections of documents with $D = 1$ and $D = 0$. Assign individuals randomly to one document (naturally either treatment or control) and measure outcomes; a difference in means suffices to estimate the treatment effect. 

The same process does not apply to estimating $\tau_t$. Instead of measuring the effect of \textit{documents} with a feature, $\tau_t$ isolates the effect of the \textit{latent treatment itself} and is given by 
\begin{align*}
    \tau_t &= \mathbb{E}[Y_i(T = 1) - Y_i(T = 0)] \\&= \mathbb{E}[Y_i(T=1,Z=z) - Y_i(T=0,Z=z)].
\end{align*}
That is, $\tau_t$ measures how changing the value of $T$ changes the outcome, holding $Z = z$ fixed. In an ideal world, we would change $T$ without changing $W$ (and hence without changing $Z$), then use a simple randomized experiment. However, changing $T$ requires changing the documents' words $W$ and thus possibly changing $Z$, leading to complications in estimating $\tau_t$. 

One approach taken in existing works, such as in \citet{fong2023causal} and \citep{katta2025llm} is to construct or curate texts such that $T$ is believed from domain knowledge and/or empirical validation to be independent of $Z$. While successful curation of such documents does allow for identification of $\tau_t$, this is a non-trivial task and strongly limits the ability to use existing documents in an observational setting. Another approach, such as in \citet{gui2022causal}, is the one taken in this paper: learning and adjusting for confounders $Z$. 

Regardless of the estimation approach, understanding how to disentangle $T$ and $Z$ for a given document is the core of the challenge in measuring effects of latent features. 

\subsection{Full Form of Estimators}
\label{app:estimators}

Throughout the paper we reference three estimators of the ATE: outcome
regression (OR), inverse propensity weighting (IPW), and the augmented
inverse propensity weighting estimator (AIPW). We give their full forms
here for reference.

\paragraph{Outcome regression.} Let $\mu_t(z) = \mathbb{E}[Y(t) \mid Z = z]$
for $t \in \{0, 1\}$. Given estimators $\hat{\mu}_0$ and $\hat{\mu}_1$, the OR estimator is
$$
\hat{\tau}^{\text{reg}} \;=\; \frac{1}{n}\sum_{i=1}^{n}
\big(\hat{\mu}_1(Z_i) - \hat{\mu}_0(Z_i)\big).
$$
If treated units are absent for some value of $z$, $\hat{\mu}_1(z)$ is not
identifiable from the data; the analogous failure for $\hat{\mu}_0$ occurs
when control units are absent. Overlap rules both out.

\paragraph{Inverse propensity weighting.} Let
$e(Z) = \Pr(T = 1 \mid Z)$ with estimator $\hat{e}$. The H\'{a}jek IPW
estimator \citep{hajek1971comment} is
$$
\hat{\tau}^{\text{hajek}} \;=\;
\frac{ \sum_{i} T_i Y_i / \hat{e}(Z_i) }
     { \sum_{i} T_i / \hat{e}(Z_i) }
\;-\;
\frac{ \sum_{i} (1 - T_i) Y_i / (1 - \hat{e}(Z_i)) }
     { \sum_{i} (1 - T_i) / (1 - \hat{e}(Z_i)) }.
$$
When overlap fails, $\hat{e}(Z_i)$ is near $0$ or $1$ for some units and
the inverse-weighted terms blow up.

\paragraph{Augmented IPW.} Combining the two
\citep{robins1995aipw}:
$$
\hat{\tau}^{\text{aipw}} \;=\; \frac{1}{n}\sum_{i=1}^{n}
\left[ \hat{\mu}_1(Z_i) - \hat{\mu}_0(Z_i)
+ \frac{T_i\,(Y_i - \hat{\mu}_1(Z_i))}{\hat{e}(Z_i)}
- \frac{(1 - T_i)\,(Y_i - \hat{\mu}_0(Z_i))}{1 - \hat{e}(Z_i)} \right].
$$
$\hat{\tau}^{\text{aipw}}$ is consistent if either the outcome model or the propensity
score model is correctly specified.

\section{Appendix: LDA Implementation and Simulations}
\label{app:all_lda}

\subsection{LDA Plate Diagram}
\label{app:LDA}

The LDA-based document structure is illustrated in Figure \ref{fig:lda}. 

\begin{figure}[h]
\centering
\includegraphics[width=0.7\columnwidth]{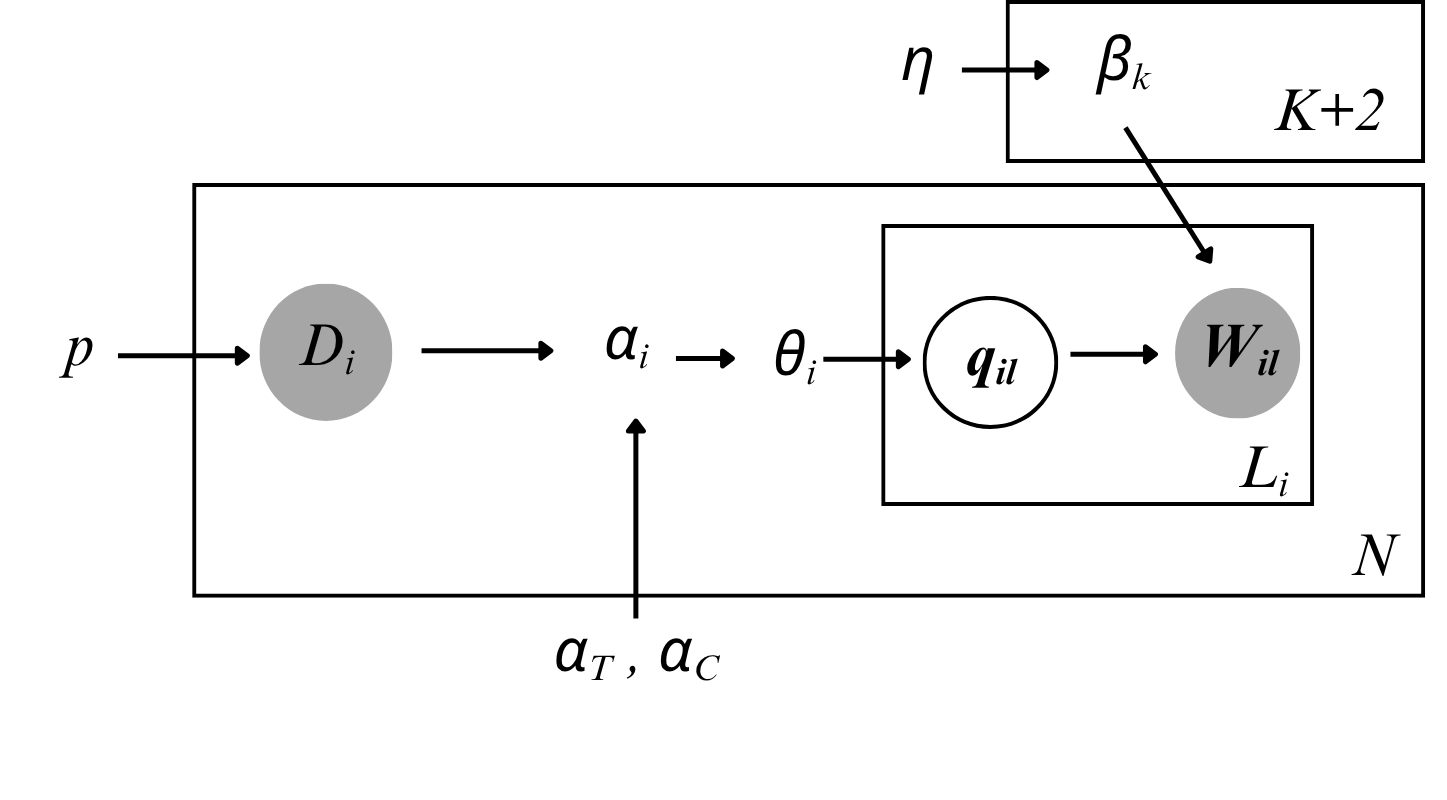}
  \caption{LDA plate diagram for assumed document structure.}
  \label{fig:lda}
\end{figure}

For a given document $i \in N$, the length $L_i$ (number of words) and true document label $D_i$ are drawn 
\begin{align*}
    L_i &\sim \mathcal{L} \\
    D_i &\sim Bern(p)
\end{align*}
The topic proportion parameter $\alpha_i$ is $\alpha_T$ when the document is treated and $\alpha_C$ when it is control. That is, $\alpha_i = D_i\alpha_T + (1-D_i)\alpha_C$. 
From this parameter, document-level topic proportion vector $\theta_i$ is drawn $$\theta_i \sim Dirichlet(\alpha_i).$$
Words $l \in \{1 \ldots L_i \} $ are drawn by first drawing topic $$q_{il} \sim Multinomial(\theta_i)$$ then word $$W_{il} \sim Multinomial(\beta_{q_{il}})$$ 
where $\beta_k$ for topic $k \in \{1 \ldots K + 2\}$ is a topic-specific word probability vector. With a dictionary of size $M$, this lies in the $M - 1$ simplex. Generally, we assume $$\beta_k \sim Dirichlet(\eta).$$

\subsection{Implementation Details}
\label{app:lda-implement}

Propensity scores are computed via logistic regression and winsorized to $[0.1, 0.9]$ to mitigate the impact of extreme values. We use an 80/20 train/test split: all models are fit on the training split and the ATE is estimated on the held-out test set to avoid post-selection bias \citep{egami2018makecausalinferencesusing}.

To identify treatment-defining words for masking, we first fit LDA with $K+2$ topics to the full (unmasked) corpus. We then compare the resulting topics to the known treatment and control word lists used in data generation, selecting the two topics with the highest overlap. The top words from these two topics form the masking lexicon. After removing these words from each document's bag-of-words representation, we refit LDA with $K$ topics on the masked text. The resulting $K$-dimensional topic proportions serve as the adjustment representation.

All LDA models are fit using the \texttt{lda} function from the R \texttt{topicmodels} package (GPL-2 license) with the VEM algorithm \citep{grun2011tm}.

\subsection{Associated Press Parameters}
\label{app:AP}

The parameters for simulations in section \ref{sec:sims} are based on an analysis of the Associated Press data in the R \texttt{topicmodels} package. As the aim of this aspect of our work is not to study underlying true causal effects in data, but rather to use the data as a base for choosing parameters, we do not require all parameters to be the underlying \textit{true} fits to the data. 

This data originally contains bag of words representations of 10,473 terms over 2,246 documents. Before continuing with parameter selection, we compute tf-idf to filter for words with the highest scores \citep{blei2009topic}, resulting in a dictionary of size 7,854. Taking $K + 2 = 20$, we fit LDA to the data and examine posterior within-document Dirichlet parameter $\alpha$ and within-topic word distribution parameter $\beta$.
To choose treatment and control topics, we examine each topics' top 100 words, selecting topics with minimal cross-topic overlap. Using these selections, we modify $\beta$:
\begin{itemize}
    \item The top 100 words are selected within treatment and control topics. In each, all other words' probabilities are set to 0.
    \item The probabilities of these 200 words are then set to 0 in all other topics. 
    \item All within-topic word probabilities are rescaled to sum to 1. 
\end{itemize}
Additionally, we modify $\alpha$ to form two parameters: $\alpha_T$ and $\alpha_C$. 
To keep documents distinctly treatment or control, we set $\alpha_T[K+1] = \alpha_C[K+2] = 0$. This means treatment documents never sample the control topic and vice versa. Remaining entries are chosen such that different confounding topics are more likely in treatment and control documents, inducing confounding. Both $\alpha_T$ and $\alpha_C$ sum to 0.5 (as did $\alpha$), keeping documents primarily composed of a smaller subset of the topics instead of being more evenly dispersed.  Values for confounder topics in $\alpha_T$ and $\alpha_C$ fall between 0.01 and 0.035; $\alpha_T[K+1]$ and $\alpha_C[K+2]$ are larger to result in treatment and control topics often being slightly more prominent in the text; this models the idea that treatment and control topics might be the focus of the documents. Figure \ref{fig:props} shows an illustrative example after one complete document generation run with $N = 2500$ of distributions of true within-document topic proportions for each of 20 topics. 
\begin{figure}
    \centering
    \includegraphics[width=0.75\linewidth]{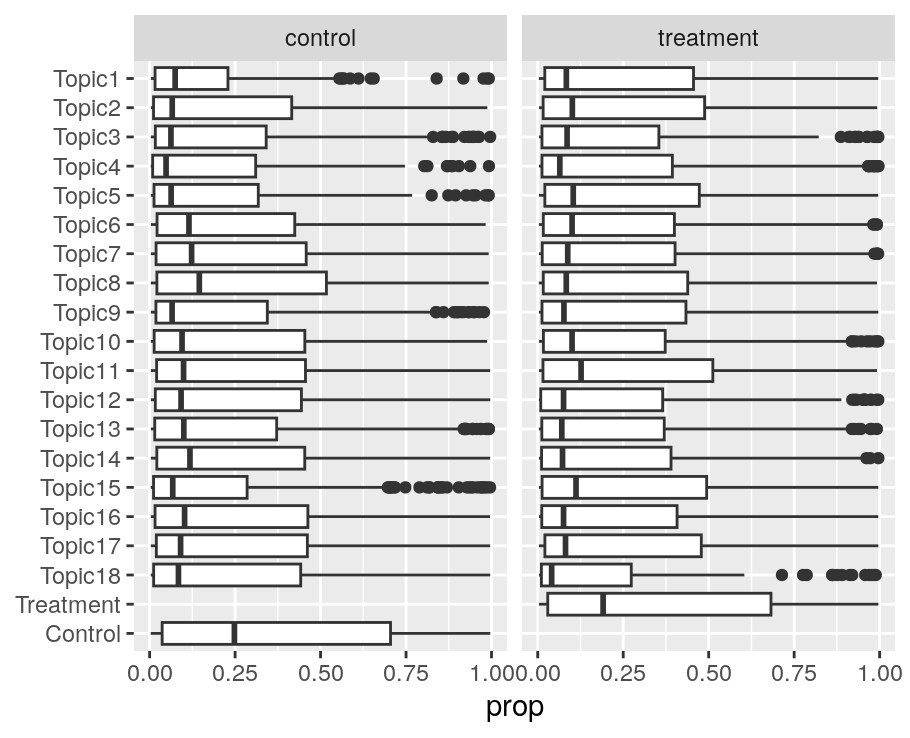}
    \caption{For each topic (y axis), the distributions of their true proportions across 2500 generated documents.}
    \label{fig:props}
\end{figure}

In outcome generation, confounding coefficients $\boldsymbol{\gamma} = [1, \ldots, \frac{2}{K}, \frac{1}{K}, 0, 0]$ where the last two entries correspond to treatment/control topics. 
\subsection{Computational Setup}
\label{app:lda-comp}
Results in Section~\ref{sec:topic} (and related appendices) were generated using a HPC with 60 cores each assigned 1GB of RAM. The processor was 
Intel(R) Xeon(R) Gold 6336Y. However, this is not a computationally intensive pipeline and running on a laptop is feasible. 

\subsection{LDA Simulations: Additional Figures}
\label{app:extra_lda}

We provide two additional presentations of results from Section~\ref{sec:sims}. 

Figure \ref{fig:rmse-app} compares the squared error values for outcome regression, IPW, and AIPW estimators computed using the full $K+2$ topics (x-axis) and masked $K$ topics (y-axis). Points that fall below the $x = y$ line have a higher squared error for the \textbf{*\_full} estimators. In outcome regression and AIPW, our masked estimators outperform the full estimator 64 and 62\% of the time, respectively. This is particularly impressive as the full covariates align with the underlying document structure and the outcome model was properly specified. For the Hájek estimator, this rises to 83\%.

\begin{figure}[h]
    \centering
    \includegraphics[width=0.5\linewidth]{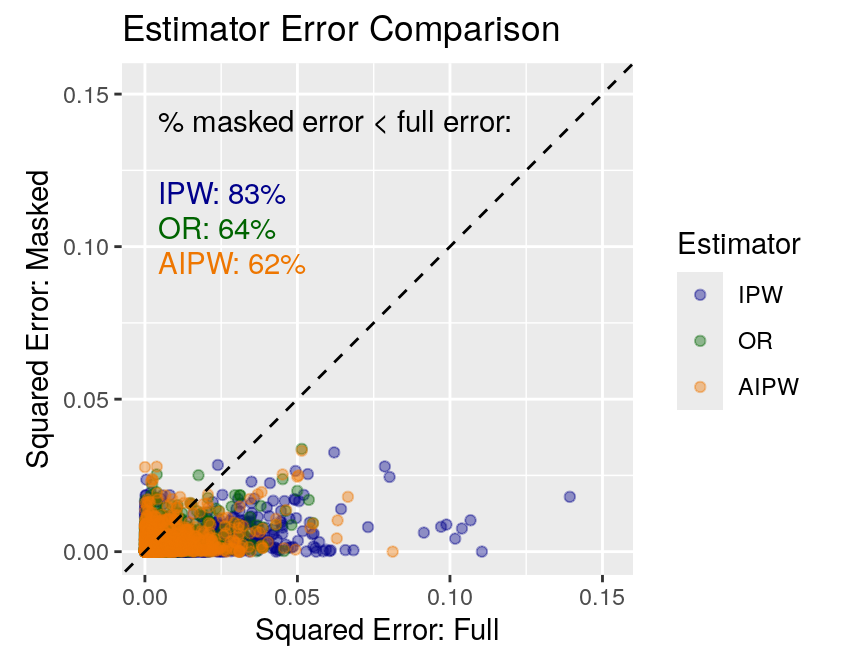}
    \caption{Squared error of \textbf{*\_full} (x-axis) vs. \textbf{*\_masked} (y-axis) AIPW (orange), Hajek IPW (blue), and outcome regression (green) estimates. Points below dashed $x = y$ line have lower error with masking. }
    \label{fig:rmse-app}
\end{figure}

Table \ref{tab:overlap-app} summarizes the proportions of documents with such values across iterations. In most masking iterations, there are nearly zero extreme values, whereas in most iterations with the full $K+2$ topics, \textit{all} documents have extreme propensity scores.  

\begin{table}[h]
\centering
\small
\caption{Distribution of share of documents with propensity scores outside $[0.1, 0.9]$ across simulations.}
\label{tab:overlap-app}
\begin{tabular}{@{}l cccccc@{}}
\toprule
Method & Min & Q1 & Median & Mean & Q3 & Max \\
\midrule
full   & .962 & 1.000 & 1.000 & .9998 & 1.000 & 1.000 \\
mask   & .000 & .000  & .006  & .0172 & .030  & .244 \\
leace  & .000 & .000  & .000  & .0002 & .000  & .224 \\
\bottomrule
\end{tabular}
\end{table}

\subsection{LDA Sensitivity Analysis: Topic Misspecification}
\label{app:ldamispec}

There are many methods for determining the correct number of topics to use in LDA \citep{JMLR:v25:23-0188, blei2003lda}. The details of each are beyond the scope of this work, however, choosing $K$ is an important aspect of any topic modeling problem. Despite the number of methods to choose an optimal value, misspecification from the latent truth is likely. 
Here, we investigate how sensitive our method is to an analyst's proposed value of $K+ 2$: \textit{what happens if we choose the wrong number of topics when running LDA to estimate confounding and find the treatment/control topic words?}. 

Using the same parameters as in Section \ref{sec:sims}, we take the value of $K+2$ used in LDA to estimate confounders to be $K + 2 \in \{10, 15, 18, 22, 25, 30\}$, showing varying levels of incorrect guesses on the number of topics. We emphasize that in all cases, the true underlying data generation remains with $K + 2 = 20$.  Figure \ref{fig:mispec-grid} show the squared error plots comparing the full vs. masked models over 1000 iterations. We see that in all cases, the masking procedure \textit{\textbf{maintains improvement}}. We see this most notably in the $K + 2 = 10$ case. While both masking and the full method have larger squared error values due to misspecification, they still improve over non-masked methods.

\begin{figure*}[t]
  \centering
  \begin{subfigure}[b]{0.49\textwidth}
    \centering
    \includegraphics[width=\linewidth]{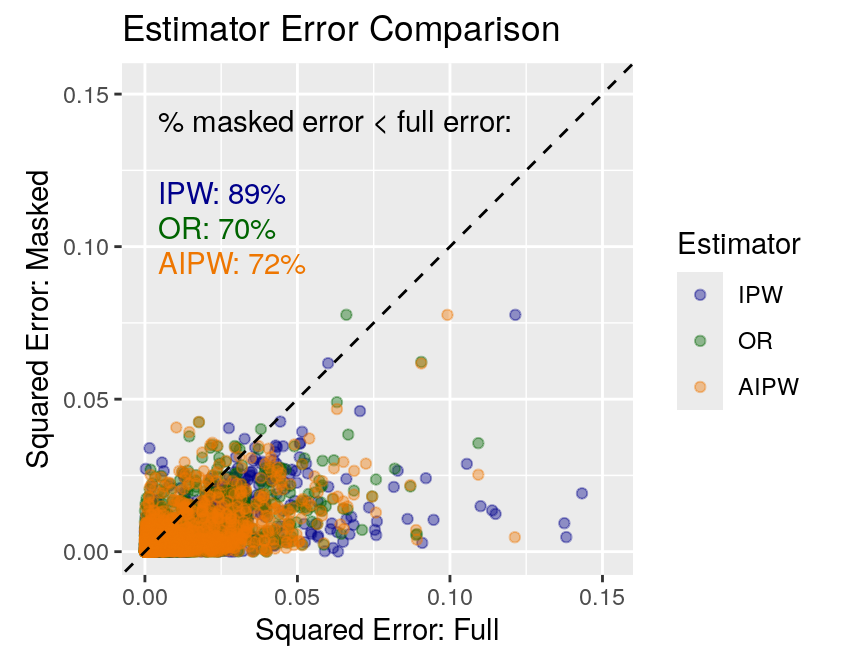}
    \caption{$K + 2 = 10$}
  \end{subfigure}\hfill
  \begin{subfigure}[b]{0.49\textwidth}
    \centering
    \includegraphics[width=\linewidth]{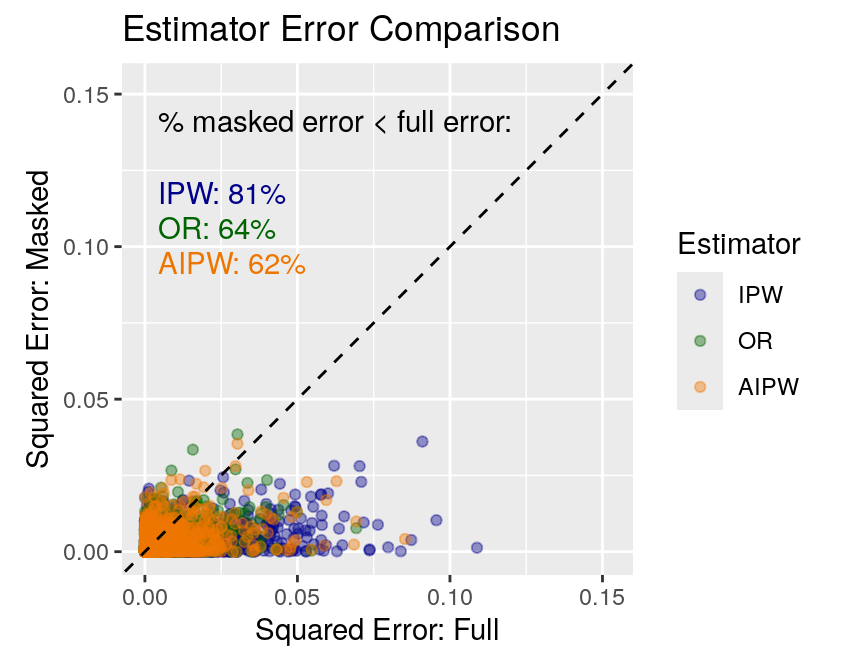}
    \caption{$K + 2 = 15$}
  \end{subfigure}

  \vspace{1ex}

  \begin{subfigure}[b]{0.49\textwidth}
    \centering
    \includegraphics[width=\linewidth]{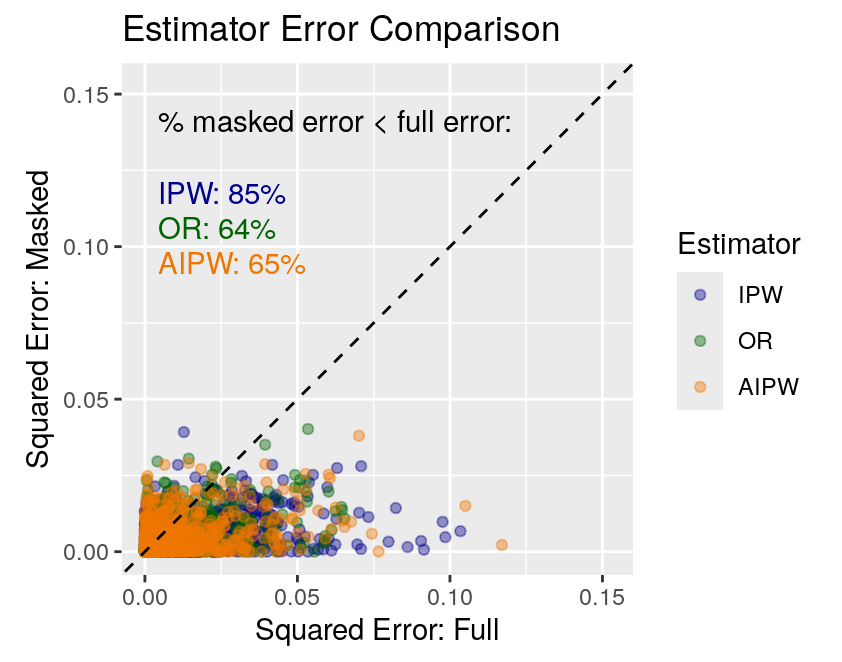}
    \caption{$K + 2 = 18$}
  \end{subfigure}\hfill
  \begin{subfigure}[b]{0.49\textwidth}
    \centering
    \includegraphics[width=\linewidth]{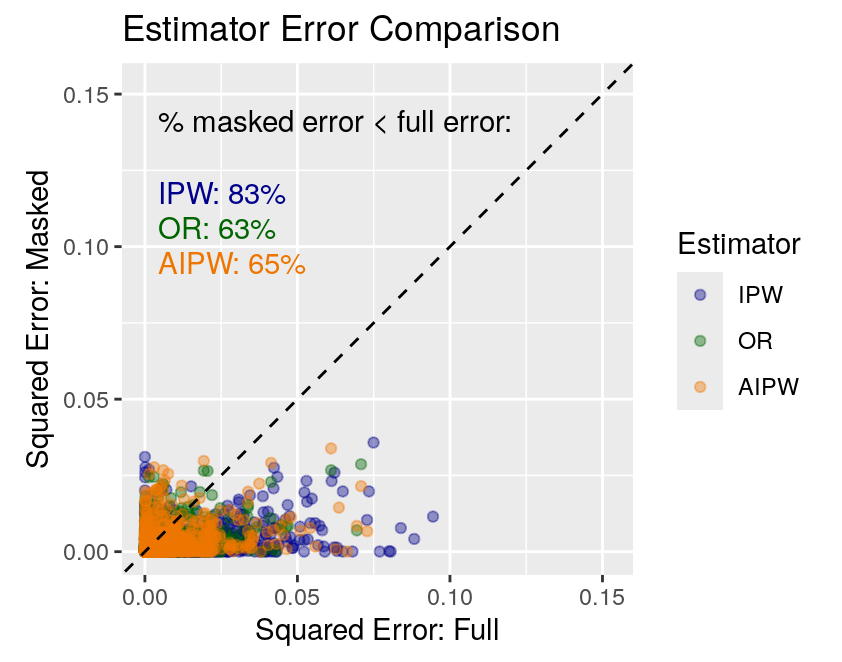}
    \caption{$K + 2 = 22$}
  \end{subfigure}

  \vspace{1ex}

  \begin{subfigure}[b]{0.49\textwidth}
    \centering
    \includegraphics[width=\linewidth]{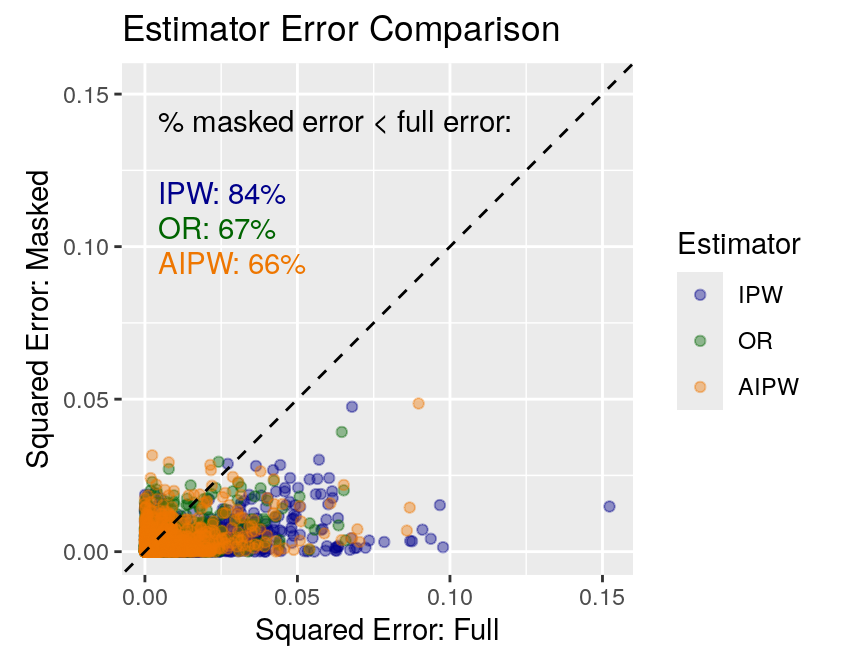}
    \caption{$K + 2 = 25$}
  \end{subfigure}\hfill
   \begin{subfigure}[b]{0.49\textwidth}
    \centering
    \includegraphics[width=\linewidth]{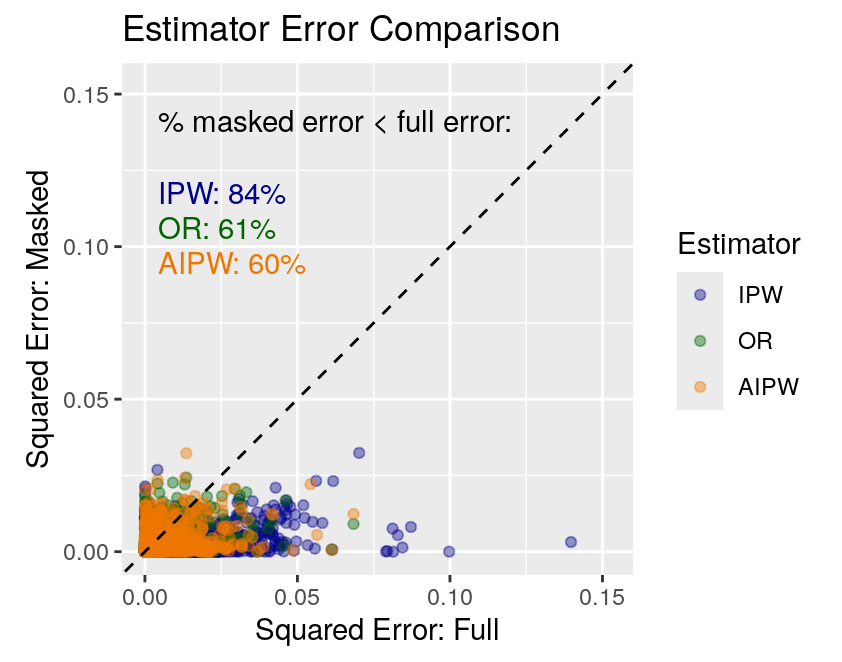}
    \caption{$K + 2 = 30$}
  \end{subfigure}

  \caption{Mispecification plots for topics 10, 15, 18, 22, and 25.}
  \label{fig:mispec-grid}
\end{figure*}

\subsection{LDA Results for Varied Simulation Parameters}
\label{app:moreparamlda}

We present illustrative results from varying document-generation parameters. Specifically, we vary the word probability parameters $\beta$ and the sum of $\alpha_T$ and $\alpha_C$ (denoted $\sum\alpha$). 

Here we show results for $K + 2 = 30$ and the same dictionary size as above (7854). We take $\sum\alpha\in \{ 0.5 , 5\}$. When $\sum\alpha = 0.5$, documents are generally dominated by a small set of topics. When $\sum\alpha = 5$, documents contain a wider variety of topics (though still not evenly dispersed).  To vary $\beta$, we modify Dirichlet parameter $\eta$. We take take $\eta_t \in \{ 0.1, 0.5\}$ as the Dirichlet parameter for within-topic word distributions of treatment and control topics. These topics have each have probabilities distributed across 100 non-overlapping words. For remaining topics, we take $\eta_{t} \in \{ 0.1, 0.5\}$ as the Dirichlet parameter for the within-topic distributions of remaining non-treatment, non-control words.  We refer to $\beta_1$ as the word probability parameters drawn when $\eta_t = 0.1, \eta_z = 0.01$ and $\beta_5$ as the word probability parameters drawn when $\eta_t = 0.5, \eta_z = 0.05$. $\beta_1$ results in topics dominated by fewer words while $\beta_5$ spreads words more evenly throughout topics.

Figure \ref{fig:more} shows results for each combination of $\beta$ and $\sum\alpha$ over 300 iterations each. In all but one case, the masking pipeline improves the error of the AIPW and outcome regression estimators. However, using $\beta_5$, $\sum\alpha = 5$, the full model has extremely small error that is not improved by masking. In all cases, the masking pipeline again improves the error of the Hájek IPW estimators; this is most notable in the $\beta_1$, $\sum\alpha = 5$ case, where the IPW estimator has lower error 97\% of the time. 

\begin{figure*}[t]
  \centering
  \begin{subfigure}[b]{0.49\textwidth}
    \centering
    \includegraphics[width=\linewidth]{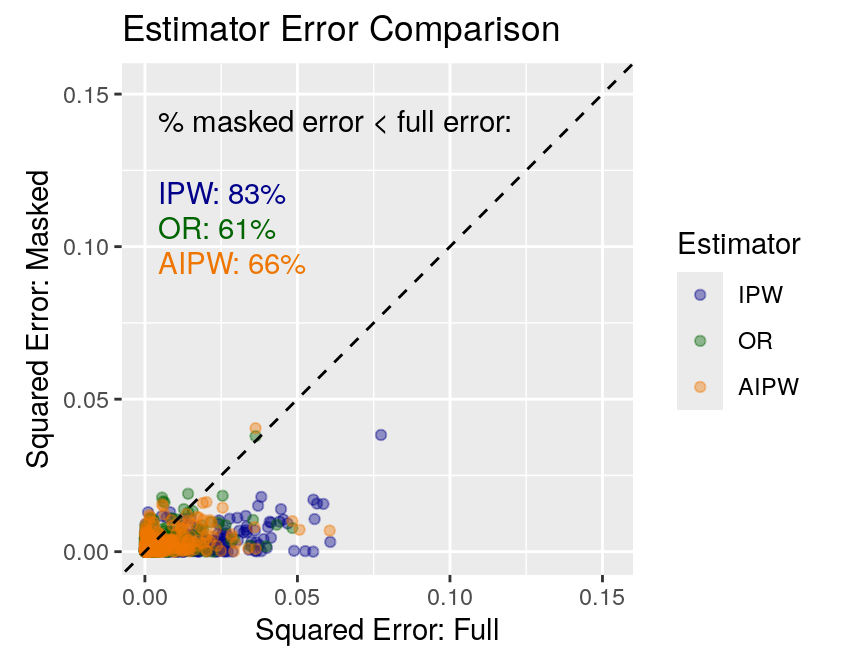}
    \caption{$\beta_1, \sum\alpha = 0.5$}
  \end{subfigure}\hfill
  \begin{subfigure}[b]{0.49\textwidth}
    \centering
    \includegraphics[width=\linewidth]{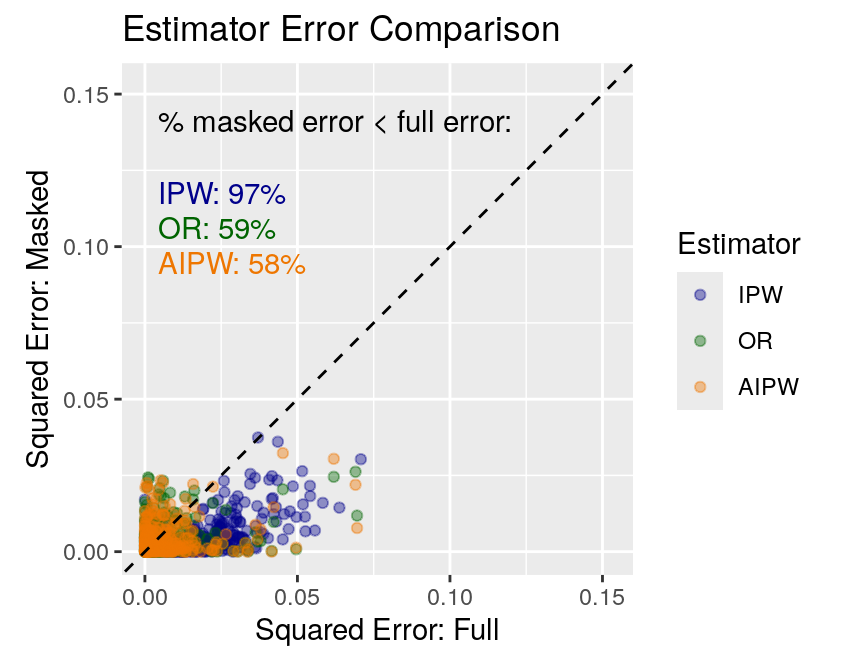}
    \caption{$\beta_1, \sum\alpha = 5$}
  \end{subfigure}

  \vspace{1ex}

  \begin{subfigure}[b]{0.49\textwidth}
    \centering
    \includegraphics[width=\linewidth]{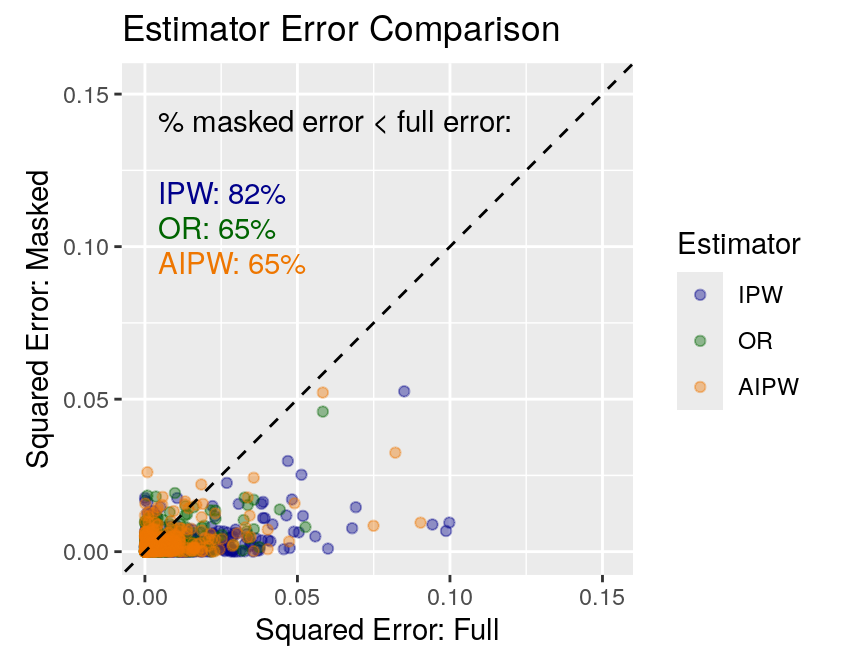}
    \caption{$\beta_5, \sum\alpha = 0.5$}
  \end{subfigure}\hfill
  \begin{subfigure}[b]{0.49\textwidth}
    \centering
    \includegraphics[width=\linewidth]{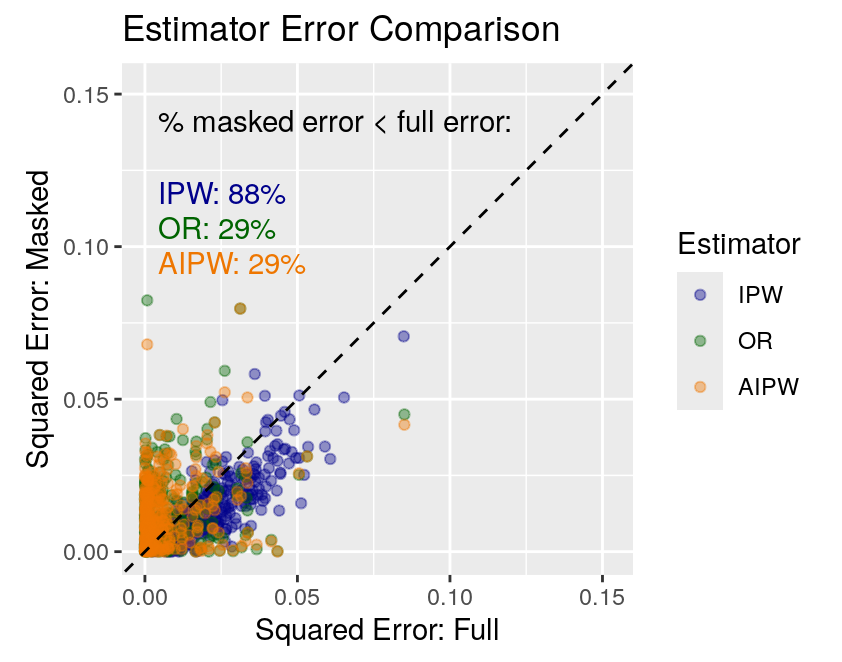}
    \caption{$\beta_5, \sum\alpha = 5$}
  \end{subfigure}

  \vspace{1ex}

  \caption{Error comparison plots for varying $\sum\alpha$ and $\beta$.}
  \label{fig:more}
\end{figure*}

\section{LLM Additional Simulations}

\subsection{Loss Weight Updates}
\label{app:lambda-details}
The weight $\lambda_{Q}$ is a scale-fixing constant. For continuous outcomes, $L_{\text{out}}$ and $L_{\text{CE}}$ can operate on very different scales: $L_{\text{out}}$ is an MSE whose magnitude depends on the variance of $Y$. Without rescaling, the larger term may dominate the gradient and effectively suppress the other. We auto-calibrate $\lambda_{Q}$ before any parameter update by rescaling $$\lambda_{Q} \leftarrow \lambda_{Q} \cdot \bar{L}_{g} / \bar{L}_{Q}$$ where $\bar{L}_{g}$ and $\bar{L}_{Q}$ are the mean per-batch cross entropy and outcome prediction losses at initialization on the training set. After this rescaling, $\lambda_{Q}$ requires no further tuning.
The penalty weight $\lambda_{P}$ is tuned adaptively during training to target a specified validation overlap rate. Let $\mathrm{pct}_{\text{val}}^{(e)}$ denote the fraction of validation data propensities in $[\alpha,\, 1 - \alpha]$ at the end of epoch $e$, and let $q_{e} = 1 - \mathrm{pct}_{\text{val}}^{(e)}$ denote the validation data violation rate. After each epoch we update
\begin{equation}
\label{eq:auto-lambda}
\lambda_{P}^{(e+1)} \;=\; \mathrm{clip}\!\left(\lambda_{P}^{(e)} \cdot \exp\!\big(\eta\,(q_{e} - q^{\star})\big),\;\; \lambda_{\min},\;\; \lambda_{\max}\right),
\end{equation}
with target violation rate $q^\star=0.1$, step size $\eta=1.0$, initialization $\lambda_P^{(0)}=100$, and guardrails $[\lambda_{\min},\lambda_{\max}]=[1,10^5]$.
This multiplicative update increases $\lambda_P$ when too many validation propensities fall outside the overlap band and decreases it when the overlap target is exceeded. We target $90\%$ validation overlap rather than forcing all propensities into $[\alpha,1-\alpha]$ because the overlap penalty is only one part of the causal objective: overly aggressive regularization can suppress treatment-predictive confounding information needed for adjustment\footnote{$90\%$ specifically is chosen as an analogy to commonly used Sturmer trimming, which trims propensity scores outside of the 95th and 5th percentiles of propensity scores, keeping the middle 90\% \citep{sturmer2021propensity}}.

\subsection{LLM Comparators}
\label{app:llm-comp}

We provide further details regarding the implementations of comparison LLM methods.

Two recent methods fine-tune DistilBERT for causal inference with text: TextCause \citep{pryzant-etal-2021-causal} and the TI-estimator \citep{gui2023causalestimationtextdata}. We compare against both because they are the closest existing implementations, and because each makes a different architectural choice that the masking framework speaks to directly. Here, we provide an overview of each method; further details are provided in Appendix~\ref{app:llm-comp}. Default hyperparameters are used.

\paragraph{TextCause} The original paper by \citet{pryzant-etal-2021-causal} introduces a method for learning proxy treatment labels alongside their text adjustment method. However, in the lexicon based treatment setting, we know the labels---it is known which documents have text from which lexicon, and it is reasonable to assume that readers are able to correctly identify this label (ie, $T = D$). Thus, we use their version of the algorithm without proxy label adjustment. Default hyper-parameters (batch size, learning rate, loss head weighting, etc.) are maintained.

TextCause uses an outcome head trained jointly with an MLM auxiliary loss; the loss contains no propensity term (there is code for a head with the ability to predict T, but it contains zero weight in loss function).  

Following \citet{katta2025llm}, we make the following modifications to their code:

\begin{enumerate}
    \item The original code computes the ATE as the difference between the control and treated outcomes; we flip the direction to be the standard treated minus control outcomes in line with most literature.
    \item The original application uses all data for both training and testing. We apply the same data split as in our masking approach, training on the train data and testing on the test data. This provides comparable estimates and avoids inducing post-selection bias.
\end{enumerate}

Furthermore, the original algorithm is implemented for binary outcomes only. To modify the code for continuous outcomes, we replace their cross entropy outcome loss with MSE. Furthermore, the original implementation trains to a suggested default of 3 epochs. Understanding that changing the loss may change the suggested number of epochs, we instead select the number of epochs using a cross validation procedure in line with our masking approach. We train the model on the $80\%$ split, select the number of epochs where the validation loss is minimized, and report the ATE on the test split at that number of epochs.

\paragraph{TI-Estimator} The estimator by \citet{gui2023causalestimationtextdata} is architecturally closer to ours: outcome heads, a jointly trained propensity head that exists to regularize the embedding, and an MLM auxiliary loss. At inference, however, TI-estimator fits a separate propensity model on the two-dimensional outcome summary $\hat{\eta}(W) = (\widehat{Q}_0(W), \widehat{Q}_1(W))$ and plugs it into an AIPW estimator.

As in the TextCause estimator, default hyperparameters are used. Following \citet{katta2025llm}, the following changes to the original code are made in our implementation:

\begin{enumerate}
    \item The propensity score estimators fit the propensity scores on the same data used to estimate treatment effects. Again, we use the same training and test split our masking procedure employs.
    \item We fit propensity scores using the Random Forest Classifier in scitkit-learn.
    \item As discussed in the main body, propensity scores are trimmed and winsorized. Many estimated propensity scores are degenerate (exactly 0 or 1); not doing so leads to unstable and \texttt{NaN} ATE estimates.
\end{enumerate}

\subsection{Amazon simulations: Confounders}
\label{app:sim:confounders}
\begin{longtable}{c >{\raggedright}p{4.5cm} p{7.5cm}}
\caption{Confounding Topic Descriptions} \label{tab:lda_topics} \\
\toprule
\textbf{Topic} & \textbf{Title} & \textbf{Description} \\
\midrule
\endfirsthead

\multicolumn{3}{c}{\tablename\ \thetable\ -- \textit{Continued from previous page}} \\
\toprule
\textbf{Topic} & \textbf{Title} & \textbf{Description} \\
\midrule
\endhead

\midrule
\multicolumn{3}{r}{\textit{Continued on next page}} \\
\endfoot

\bottomrule
\endlastfoot

0  & General Product Reviews       & Overall product satisfaction, purchase experience, shipping, and usage over time \\[3pt]
1  & Supplements \& Energy          & Vitamins, energy supplements, immune support, taste, and ingredient quality \\[3pt]
2  & Dental \& Eyelash Care         & Teeth whitening, toothpaste, flossing, and eyelash curlers \\[3pt]
3  & Household \& Beverages         & Pumps, straws, cups, laundry detergent, coffee, and scented products \\[3pt]
4  & Brushes \& Grooming Tools      & Toothbrushes, pet grooming brushes, trimmers, and cleaning brushes \\[3pt]
5  & Sleep \& Sound Aids            & Earplugs, sound machines, night lights, batteries, and charging accessories \\[3pt]
6  & Mobility \& Compression Wear   & Compression socks, knee braces, walkers, and post-surgery support \\[3pt]
7  & Scales \& Measurement          & Bathroom scales, weight accuracy, digital displays, and readings \\[3pt]
8  & Essential Oils \& Bath          & Essential oils, bath bombs, diffusers, and aromatherapy products \\[3pt]
9  & Water \& Hydration             & Water bottles, hydration reminders, and daily water intake \\[3pt]
10 & Skincare \& Body Wash          & Soaps, lotions, exfoliators, makeup removers, and moisturizers \\[3pt]
11 & Footwear \& Foot Care          & Shoe insoles, heel pads, foot cushions, and plantar fasciitis aids \\[3pt]
12 & Product Failures \& Returns    & Defective products, returns, broken items, and disappointment \\[3pt]
13 & Nail Care Tools                & Nail clippers, files, manicure sets, scissors, and grooming tools \\[3pt]
14 & Massage \& Facial Devices      & Facial and neck massagers, suction tools, and steamers \\[3pt]
15 & Mixed Product Feedback         & General opinions, shipping issues, taste, value, and repeat purchases \\[3pt]
16 & Seating \& Ergonomic Support   & Seat cushions, car accessories, shower chairs, and lumbar support \\[3pt]
17 & Protein \& Meal Powders        & Protein powder, smoothie mixes, flavors, collagen, and meal replacements \\[3pt]
18 & Bags \& Medical Supplies       & Medical bags, supply organizers, nasal/oxygen aids, and bag quality \\[3pt]
19 & Personal Care \& Baby          & Beard care, baby products, shaving cream, shampoo, and skincare \\[3pt]
20 & Gifts \& Packaging             & Gift kits, bags, holiday packaging, party supplies, and value sets \\[3pt]
21 & Masks \& Wearable Accessories  & Face masks, fit and comfort, straps, fabric quality, and sizing \\[3pt]
22 & Specialty Oils \& Wellness     & Tea tree oil, fish oil, yoga mats, wipes, and natural wellness \\[3pt]
23 & Cleaning Products              & Household cleaners, sprays, glass cleaners, car wax, and stain removers \\[3pt]
24 & Eyewear \& Blood Pressure      & Reading glasses, sunglasses, blood pressure monitors, and gloves \\[3pt]
25 & Party Supplies \& Paper Goods  & Birthday decorations, balloons, wrapping paper, tape, and bandages \\[3pt]
26 & Media \& Miscellaneous         & Movies, games, phone accessories, crystals, and language learning \\[3pt]
27 & Fitness \& Pre-Workout         & Pre-workout supplements, creatine, posture correctors, and gym gear \\[3pt]
28 & Replacements \& Razors         & Replacement parts, batteries, razor heads, and brand compatibility \\[3pt]
29 & Home \& Bath Fixtures          & Shower accessories, pillows, stools, air purifiers, and bath fixtures \\[3pt]
30 & Pill Storage \& Sizing         & Pill organizers, container sizes, medication storage, and compartments \\[3pt]
31 & Hair Styling Tools             & Flat irons, curling irons, hair dryers, heat settings, and styling \\[3pt]
32 & Children \& Celebrations       & Baby showers, kids' gifts, wedding supplies, filters, and decor \\[3pt]
33 & Travel \& Vitamin Storage      & Travel cases, gummy vitamins, canes, and portable organizers \\[3pt]
34 & Books \& Stationery            & Books, greeting cards, recipes, photo albums, and paper goods \\
\end{longtable}

\subsection{Compute Resources: LLMs}
\label{compute-llm}

Fine-tuning our masking estimator uses DistilBERT (66M parameters) with batch size 32, max sequence length 128, and up to 20 epochs. A single
fine-tune-plus-evaluation run of the masking and TextCause estimators on the Amazon dataset ($\approx 13,500$ rows) completes in approximately 30 minutes, peaks at around 4~GB of GPU memory, and fits on a single GPU with $\geq 8$~GB of VRAM.

The TI-estimator uses the default hyperparameters provided by the code of \citet{gui2023causalestimationtextdata}. These match our masking configuration except for a batch size of 64. A single fine-tune-plus-evaluation
run of this estimator on the Amazon data completes in approximately 10--30 minutes, peaks at around 9~GB of GPU memory, and fits on a single GPU with $\geq 16$~GB of VRAM.

Reported simulations were run on whichever GPUs were first available on our SLURM cluster: NVIDIA RTX 2080 Ti (11~GB), Tesla P100 (16~GB), RTX A5000 (24~GB), RTX A6000 / RTX 5000 Ada / RTX 6000 Ada (48~GB), and H200 (141~GB). Results are insensitive to GPU choice.

The Amazon simulations comprise 100 iterations $\times$ 4 settings (binary/continuous $\times$ low/high confounding) $\times$ 3 estimators
per setting, totaling approximately 600~GPU-hours. The CFPB application uses 100 bootstrap splits across 3 estimators, totaling approximately
100~GPU-hours. 

Hyperparameter exploration and ablation studies required additional compute hours beyond these reported totals, though the
per-run resource requirements are unchanged.

\subsection{Licenses for LLM Resources}
\label{app:llm-licenses}
 
Table~\ref{tab:llm-licenses} lists the licenses for all external resources used in the LLM experiments (Sections~5 and~6).
Entries marked by \textsuperscript{$\dagger$} are from academic articles with linked GitHub pages who do not identify a license file.
Entry marked by \textsuperscript{$\dagger$} is data publically available from government website. Available at \url{https://www.consumerfinance.gov/data-research/}.
 
\begin{table}[h]
\centering
\caption{Licenses for resources used in LLM experiments.}
\label{tab:llm-licenses}
\begin{tabular}{lll}
\toprule
\textbf{Resource} & \textbf{Reference} & \textbf{License} \\
\midrule
\multicolumn{3}{l}{\textit{Models \& Libraries}} \\
DistilBERT & \citet{sanh2019} & Apache 2.0 \\
HuggingFace Transformers & \citet{wolf2019transformers} & Apache 2.0 \\
PyTorch (incl.\ AdamW) & \citet{paszke2019} & BSD 3-Clause \\
scikit-learn & \citet{scikit-learn} & BSD 3-Clause \\
\midrule
\multicolumn{3}{l}{\textit{Datasets}} \\
Amazon Reviews & \citet{hou2024bridging} & MIT \\
CFPB Complaint Database& -- & \ Government Site\textsuperscript{$\ddagger$} \\
\midrule
\multicolumn{3}{l}{\textit{Packages \& Comparator Code}} \\
\texttt{politeness} R package & \citet{Yeomans2018} & MIT \\
TextCause & \citet{pryzant-etal-2021-causal} & No license identified\textsuperscript{$\dagger$} \\
TI-estimator & \citet{gui2023causalestimationtextdata} & No license identified\textsuperscript{$\dagger$} \\
\bottomrule
\end{tabular}
\end{table}

\subsection{Ablation: Masking Components }
\label{app:ab-mask-comp}

The LLM masking mechanism combines two components: (1) Replacement masking by treatment-defining tokens during training and (2) the adaptive overlap penalty of  $L_{\textrm{penalty}}$.  This section studies the contribution of each component to model performance.

\subsubsection{Masking Alone; No Penalty}
\label{app:no-pen}

Tables \ref{tab:bin-no-p} and \ref{tab:cont-no-p} compare the full method (Masking Auto) against an ablation that trains on masked text but does not include the penalty loss (Masking: No Penalty). We find that token masking alone reduces bias and MSE relative to the naive estimator, TextCause, and the TI variants across three out of four settings, \textit{\textbf{confirming that removing the lexical shortcut during training is a primary driver of improvement}}. Compared to masking with a penalty, no penalty masking has in different settings similar, lower, and higher MSE, though it is generally close.

The overlap diagnostic require careful interpretation. The median fraction of propensity scores in $[0.1, 0.9]$ is 16-27\% without the penalty, compared to 87-98\% with it. However, $[0.1, 0.9]$ is a diagnostic threshold, not the true overlap band $[\alpha^*, 1 - \alpha^*]$, which is unknown. Propensity scores outside $[0.1, 0.9]$ are not necessarily overlap violations if the true confounding structure supports more extreme values. The more informative comparison is between masking without penalty and the unmasked TI-estimator. Figure \ref{fig:prop-grid} shows the contrast clearly: the TI-estimator produces propensity distributions that are effectively binary, with tall spikes at 0 and 1 and little mass between, while the masking-only estimator yields smoother, overlapping distributions across the unit interval. In fact,  in each of the settings, at least 85\% of TI propensity scores are exactly 0 or 1, reflecting deterministic treatment encoding. By contrast, masking without penalty produces zero degenerate propensity scores. The distributions retain spread and maintain (imperfect) separation between treatment arms, but without the point masses at the boundaries that indicate complete overlap collapse.

The penalty provides additional control for practitioners who require propensity scores within a specific band, such as the $[0.1, 0.9]$ range commonly used for trimming or winsorization, but it is not necessary to avoid the degenerate encoding that motivates this paper.

\begin{table}[h]\centering\small
\begin{tabular}{@{}l cccc cccc@{}}
\toprule
& \multicolumn{4}{c}{\textbf{Low Confounding}} & \multicolumn{4}{c}{\textbf{High Confounding}} \\
\cmidrule(lr){2-5}\cmidrule(lr){6-9}
Estimator & MSE & Bias & Var & PS & MSE & Bias & Var & PS \\
\midrule
True & .0000 & .0001 & .0000 & -- & .0000 & .0006 & .0000 & -- \\
Naive & .0230 & .1514 & .0001 & -- & .0553 & .2349 & .0001 & -- \\
\midrule
Masking (Auto) & .0038 & .0287 & .0030 & 96.0\% & .0115 & .1000 & .0015 & 97.7\% \\
Masking: No Penalty & .0037 & .0362 & .0024 & 26.4\% & .0051 & .0524 & .0024 & 26.5\% \\
TextCause & .0237 & .1531 & .0002 & 100.0\% & .0565 & .2371 & .0003 & 100.0\% \\
TI (Trimmed) & .0239 & .0509 & .0213 & 2.2\% & .0280 & .0491 & .0256 & 2.2\% \\
TI (Winsorized) & .0363 & .1693 & .0076 & 2.2\% & .0436 & .1908 & .0072 & 2.2\% \\
TI (Out. Reg.) & .0434 & .1887 & .0078 & 2.2\% & .0523 & .2124 & .0072 & 2.2\% \\
\bottomrule
\end{tabular}

\caption{Binary outcomes: comparison of estimators, including an iteration with no penalty loss.}
\label{tab:bin-no-p}
\end{table}

\begin{table}[h]\centering\small
\begin{tabular}{@{}l cccc cccc@{}}
\toprule
& \multicolumn{4}{c}{\textbf{Low Confounding}} & \multicolumn{4}{c}{\textbf{High Confounding}} \\
\cmidrule(lr){2-5}\cmidrule(lr){6-9}
Estimator & MSE & Bias & Var & PS & MSE & Bias & Var & PS \\
\midrule
True & .0229 & .0186 & .0226 & -- & .0229 & .0186 & .0226 & -- \\
Naive & .3669 & .6058 & .0000 & -- & 1.1238 & 1.0601 & .0000 & -- \\
\midrule
Masking (Auto) & .0195 & -.0425 & .0177 & 87.0\% & .0442 & -.0674 & .0397 & 87.9\% \\
Masking: No Penalty & .0136 & .0330 & .0125 & 16.4\% & .0515 & .1603 & .0258 & 17.2\% \\
TextCause & .2746 & .5219 & .0021 & 100.0\% & .6132 & .7779 & .0081 & 100.0\% \\
TI (Trimmed) & .0336 & -.1072 & .0221 & 2.9\% & .0615 & -.1614 & .0355 & 5.5\% \\
TI (Winsorized) & .0248 & .1420 & .0047 & 2.9\% & .0422 & .1626 & .0157 & 5.5\% \\
TI (Out. Reg.) & .0380 & .1820 & .0049 & 2.9\% & .0497 & .1789 & .0177 & 5.5\% \\
\bottomrule
\end{tabular}

\caption{Continuous outcomes: comparison of estimators, including an iteration with no penalty loss.}
\label{tab:cont-no-p}
\end{table}

\begin{figure}[htbp]
    \centering
    \begin{tabular}{cc}
        \includegraphics[width=0.47\textwidth]{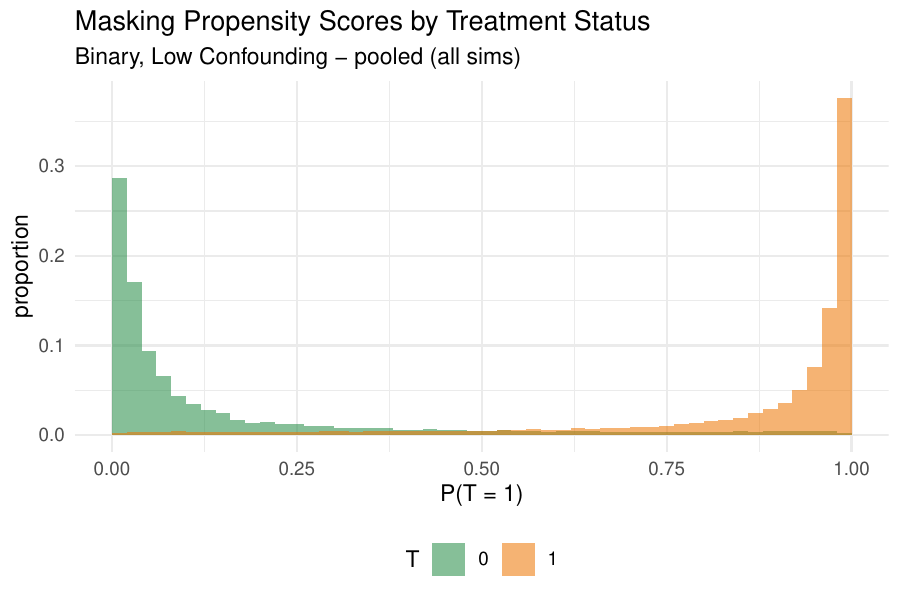} &
        \includegraphics[width=0.47\textwidth]{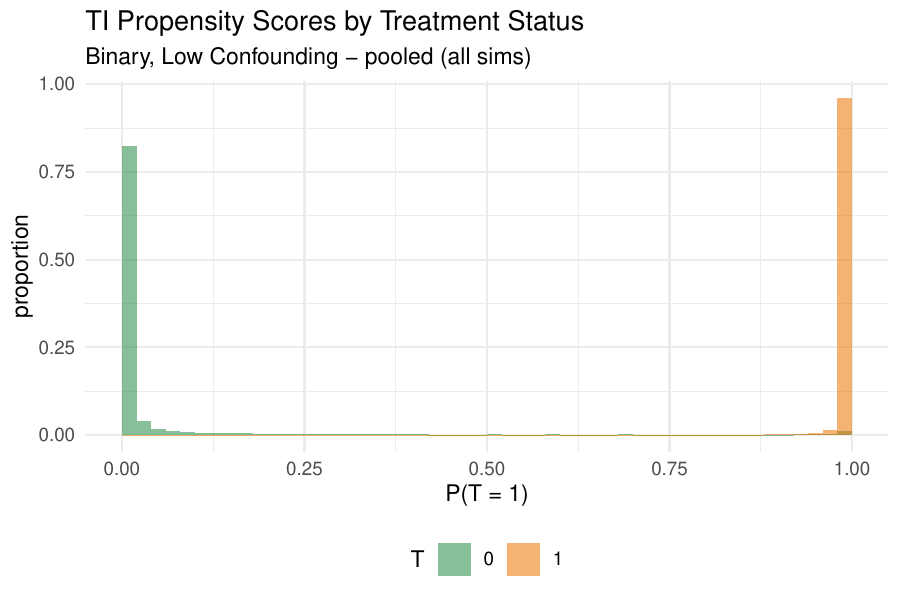} \\
        \includegraphics[width=0.47\textwidth]{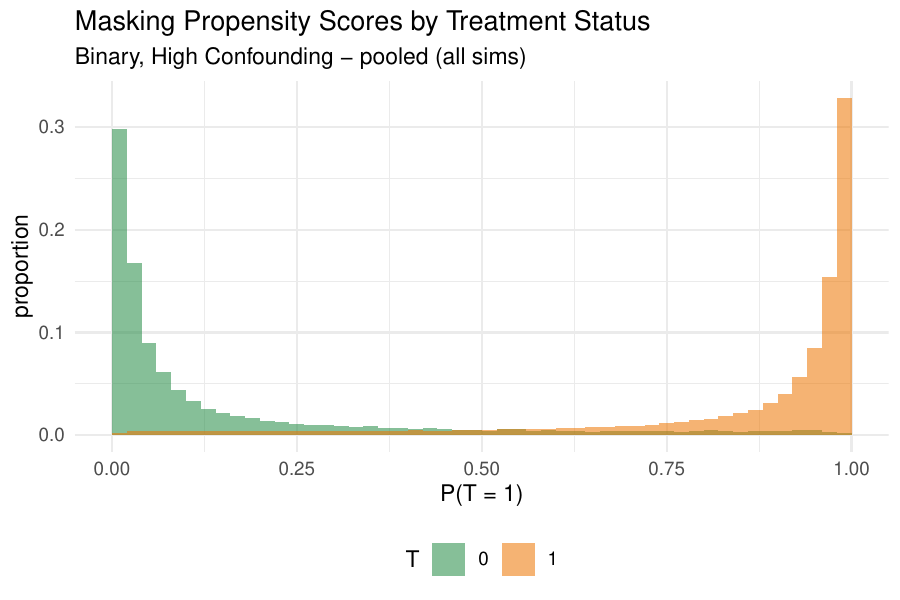} &
        \includegraphics[width=0.47\textwidth]{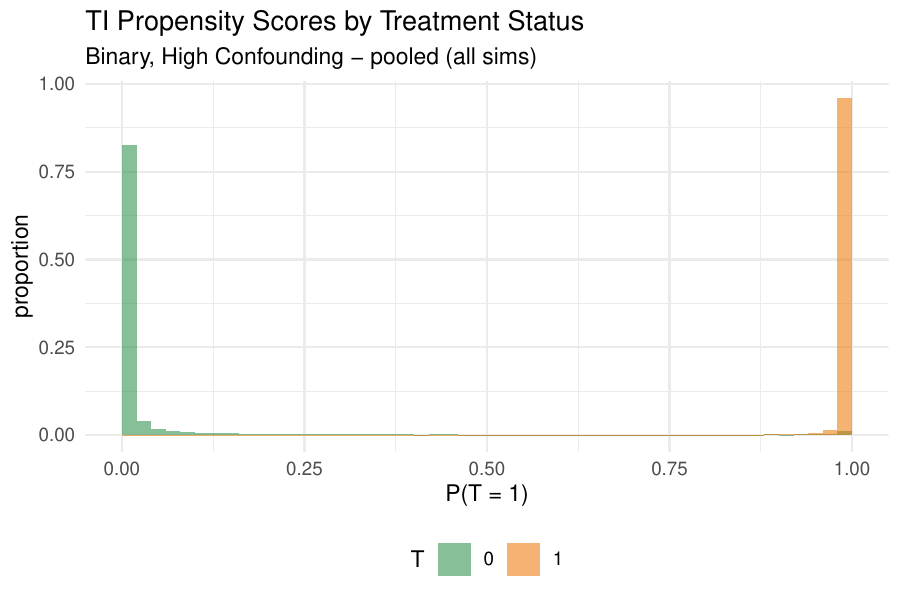} \\
        \includegraphics[width=0.47\textwidth]{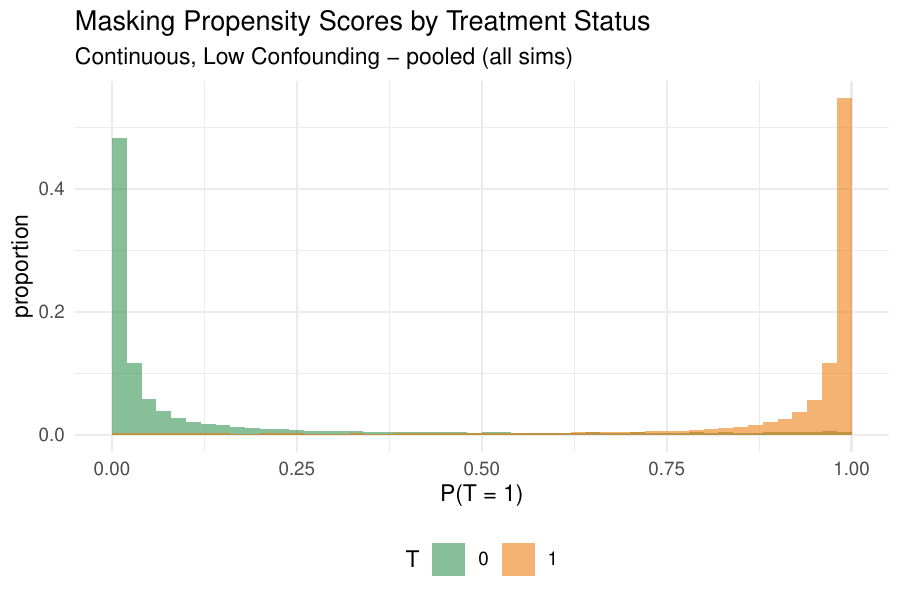} &
        \includegraphics[width=0.47\textwidth]{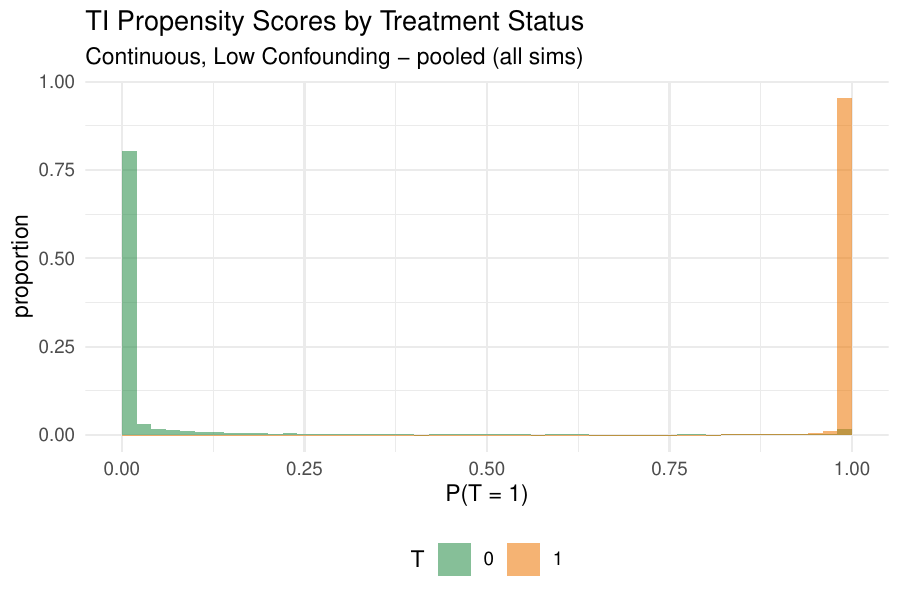} \\
        \includegraphics[width=0.47\textwidth]{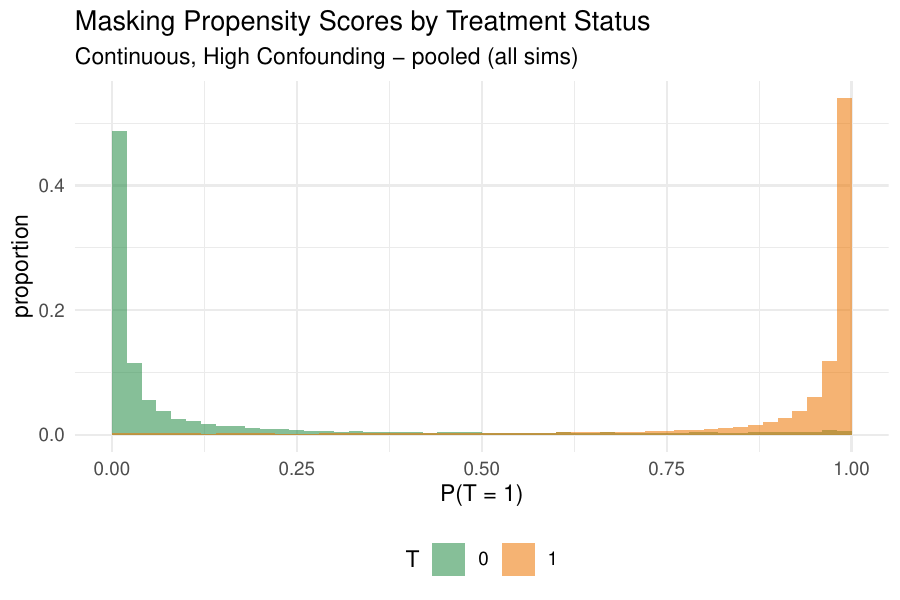} &
        \includegraphics[width=0.47\textwidth]{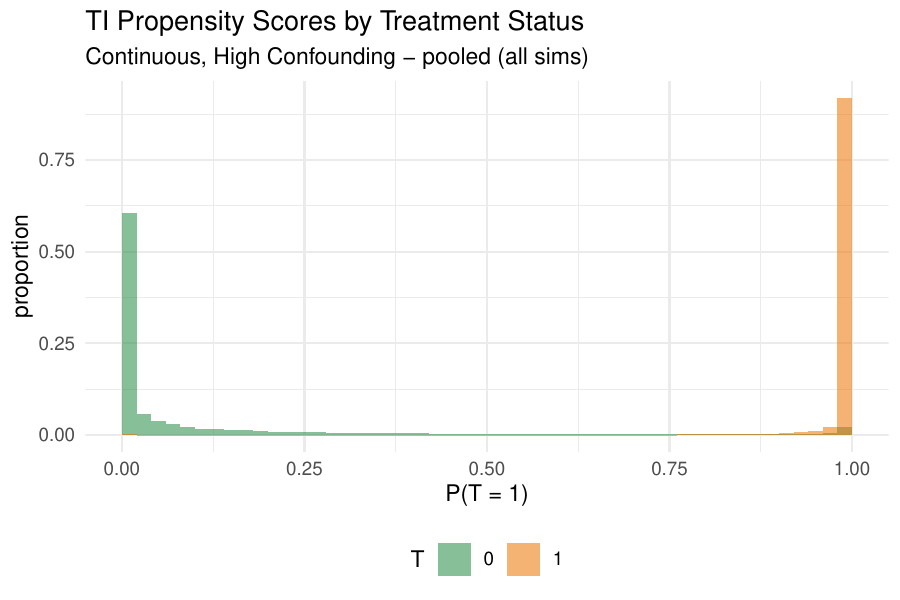} \\
    \end{tabular}
    \caption{Estimated propensity score distributions for the masking-only estimator (no penalty; left column) and the TI-estimator (right column), pooled across all 100 simulation iterations. Rows correspond to binary low confounding, binary high confounding, continuous low confounding, and continuous high confounding. Colors indicate treatment status. The TI-estimator concentrates propensity mass at 0 and 1, reflecting deterministic treatment encoding from unmasked text. The masking-only estimator produces no degenerate scores and maintains overlapping support across the unit interval. Notice differences in $y$-axis scales.}
    \label{fig:prop-grid}
\end{figure}

\subsubsection{Penalty Alone; No Masking}

\begin{table}[h]\centering\small
\begin{tabular}{@{}l cccc cccc@{}}
\toprule
& \multicolumn{4}{c}{\textbf{Low Confounding}} & \multicolumn{4}{c}{\textbf{High Confounding}} \\
\cmidrule(lr){2-5}\cmidrule(lr){6-9}
Estimator & MSE & Bias & Var & PS & MSE & Bias & Var & PS \\
\midrule
True & .0000 & .0001 & .0000 & -- & .0000 & .0006 & .0000 & -- \\
Naive & .0230 & .1514 & .0001 & -- & .0553 & .2349 & .0001 & -- \\
\midrule
Masking (Auto) & .0038 & .0287 & .0030 & 96.0\% & .0115 & .1000 & .0015 & 97.7\% \\
Masking - Penalty Only & .0110 & .0980 & .0014 & 76.4\% & .0221 & .1435 & .0015 & 79.2\% \\
TextCause & .0237 & .1531 & .0002 & 100.0\% & .0565 & .2371 & .0003 & 100.0\% \\
TI (Trimmed) & .0239 & .0509 & .0213 & 2.2\% & .0280 & .0491 & .0256 & 2.2\% \\
TI (Winsorized) & .0363 & .1693 & .0076 & 2.2\% & .0436 & .1908 & .0072 & 2.2\% \\
TI (Out. Reg.) & .0434 & .1887 & .0078 & 2.2\% & .0523 & .2124 & .0072 & 2.2\% \\
\bottomrule
\end{tabular}
\caption{Binary outcomes: comparison of estimators, including with penalty and no masking.}
\label{tab:bin-pen-only}
\end{table}

\begin{table}[h]\centering\small
\begin{tabular}{@{}l cccc cccc@{}}
\toprule
& \multicolumn{4}{c}{\textbf{Low Confounding}} & \multicolumn{4}{c}{\textbf{High Confounding}} \\
\cmidrule(lr){2-5}\cmidrule(lr){6-9}
Estimator & MSE & Bias & Var & PS & MSE & Bias & Var & PS \\
\midrule
True & .0229 & .0186 & .0226 & -- & .0229 & .0186 & .0226 & -- \\
Naive & .3669 & .6058 & .0000 & -- & 1.1238 & 1.0601 & .0000 & -- \\
\midrule
Masking (Auto) & .0195 & -.0425 & .0177 & 87.0\% & .0442 & -.0674 & .0397 & 87.9\% \\
Masking - Penalty Only & .0158 & -.0371 & .0144 & 90.1\% & .0651 & .1519 & .0420 & 90.7\% \\
TextCause & .2746 & .5219 & .0021 & 100.0\% & .6132 & .7779 & .0081 & 100.0\% \\
TI (Trimmed) & .0336 & -.1072 & .0221 & 2.9\% & .0615 & -.1614 & .0355 & 5.5\% \\
TI (Winsorized) & .0248 & .1420 & .0047 & 2.9\% & .0422 & .1626 & .0157 & 5.5\% \\
TI (Out. Reg.) & .0380 & .1820 & .0049 & 2.9\% & .0497 & .1789 & .0177 & 5.5\% \\
\bottomrule
\end{tabular}
\caption{Continuous outcomes: comparison of estimators, including with penalty and no masking.}
\label{tab:cont-pen-only}
\end{table}

Tables ~\ref{tab:bin-pen-only} and \ref{tab:cont-pen-only} report an ablation that trains on unmasked text but retains the adaptive overlap penalty (Masking: Penalty Only). Unlike the full masking pipeline, this variant requires no treatment-defining lexicon: the penalty regularizes propensity scores toward the interior of $[0,1]$ regardless of the source of treatment encoding.  This makes it applicable, in principle, to any text-as-treatment setting, including those where treatment is not lexically defined.

The penalty alone improves overlap relative to the TI-estimator, with 76--90\% of propensity scores in $[0.1, 0.9]$. However, because the encoder still receives treatment-defining tokens as input, the penalty must work against a strong gradient signal pushing the representation toward encoding treatment status. MSE is reduced relative to the naive estimator, TextCause, and most TI variants in all settings. In three of four settings, the full masking pipeline achieves lower MSE, confirming that removing the lexical shortcut at the input level and regularizing the propensity score are complementary.  In one setting (continuous, low confounding), the penalty-only variant achieves MSE of .0158 compared to .0195 for masking, suggesting that the penalty can occasionally provide sufficient regularization on its own.

The adaptive update rule provides further insight into the division of labor between masking and the penalty.  When training on masked text, $\lambda_P$ decreases from its initial value of 100 in a majority of iterations (median final value: 78; minimum: 26), indicating that with the lexical shortcut removed, the encoder produces moderate propensity scores without strong regularization.  When training on unmasked text (penalty only), $\lambda_P$ increases in every iteration (minimum final value: 165; maximum: 738), reflecting the substantially stronger regularization needed to counteract the gradient signal from treatment-defining tokens.  This asymmetry confirms that the penalty works harder in the absence of masking. It must suppress treatment encoding that masking would have prevented at the input level.

These results frame the penalty not merely as a component of the masking pipeline, but as a lexicon-free tool for mitigating overlap collapse in text-as-treatment problems.  While it lacks the theoretical guarantees available for lexicon-based masking, it offers a practical method when no lexicon is available.

To investigate the mechanism responsible for this, we perform an exploratory analysis of how the encoder treats $\mathcal{L}$ versus \texttt{[MASK]} tokens in producing the \texttt{[CLS]} representation. For a fixed held-out set of test documents, we tokenize each twice — once as $\mathbf{W}_i$ and once as $\mathbf{W}_i^{\mathrm{mask}}$ — and at every epoch compute the cosine similarity between the average \texttt{[CLS]} vectors produced from the two tokenizations.

A value near 1 indicates the encoder has learned an invariance to the $\mathcal{L} \to \texttt{[MASK]}$ swap; a value near 0 indicates that masking lexicon tokens substantially shifts the representation. We compare three training regimes: train on $\mathbf{W}^{\mathrm{mask}}$ with $L_\text{penalty}$ (production), train on $\mathbf{W}$ with $L_\text{penalty}$ (penalty-only ablation), and train on $\mathbf{W}$ without $L_\text{penalty}$.

\begin{wrapfigure}{r}{0.5\linewidth}
    \centering
    \includegraphics[width=0.5\textwidth]{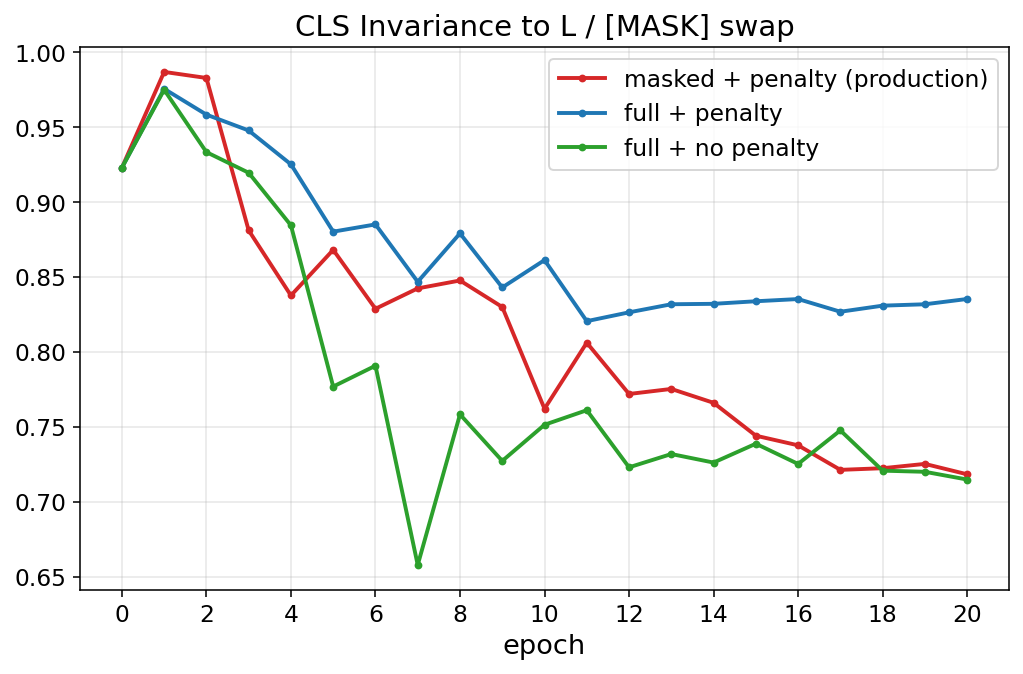}
    \caption{Exploratory; single simulation on binary high confounding data.}
    \label{fig:cls-drift}
\end{wrapfigure}

\textit{Interpretation.} Figure \ref{fig:cls-drift} shows exploratory results from a single simulation on the binary, high-confounding Amazon dataset; replicating across simulations and confounding settings is a direction for future work.

Training on $\mathbf{W}$ with $L_\text{penalty}$ sustains a high cosine across epochs ($\sim 0.84$), indicating that the penalty induces the encoder to treat $\mathcal{L}$ and \texttt{[MASK]} similarly at the \texttt{[CLS]}.

Training on $\mathbf{W}$ without $L_\text{penalty}$ shows the largest representational gap, consistent with the lexicon carrying information whose removal substantively shifts the representation.

The production regime often falls between the two which we read as further support for the train-on-masked, evaluate-on-full design: evaluation on $\mathbf{W}$ requires the encoder to set aside its learned representation of the mask pattern and re-engage pretrained knowledge of lexicon tokens. 

We treat these readings as hypotheses pending further study, and note that they suggest a representation-level analysis of $L_\text{penalty}$ as a potential path toward extending the framework beyond lexicon-based treatments.

\paragraph{Summary.}
The two components address complementary failure modes. Masking removes the direct lexical path from treatment-defining tokens to the representation, eliminating the dominant source of overlap collapse. The penalty regularizes residual treatment signal that survives masking, ensuring propensity scores remain in a range suitable for stable inference. Their combination is a reliable method for practitioners to encode textual confounders. We suggest a two-tier recommendation:
\begin{enumerate}
    \item \textbf{Lexicon available:} Apply masking and the overlap penalty jointly for the strongest and most theoretically grounded performance.
    \item \textbf{No lexicon available:} The overlap penalty alone provides a lexicon-free alternative that still substantially reduces bias and improves overlap relative to existing text-as-treatment methods.
\end{enumerate}

\subsection{Evaluation on Masked Text}
\label{app:eval-mask}

Section \ref{sec:masking} introduced the design choice to train on masked text but
evaluate on the original  document. Evaluating on
$\mathbf{W}$ may be beneficial as the mask-trained encoder has not
learned to exploit lexicon tokens, yet can still draw on non-lexicon
context that those tokens provide.

Table \ref{tab:results_mask_eval} compares the results of the proposed LLM procedure from Section~\ref{sec:llm} evaluating on masked text compared to full text. In three out of four cases, we see a lower MSE when evaluating on full text. This is empirical support that the mask-trained encoder, having encountered \texttt{[MASK]} tokens throughout training, is sensitive to their presence and frequency. Full-text evaluation achieves lower MSE in
three of four settings, with the largest gains under continuous outcomes: MSE drops from 0.0637 to 0.0195 under low confounding and from 0.1211 to 0.0442 under high confounding. The overlap diagnostics shift correspondingly, with the fraction of propensity scores in
$[0.1, 0.9]$ rising from 17--18\% under masked evaluation to 87--88\% under full-text evaluation. Under binary high confounding, masked evaluation achieves a lower MSE (0.0053 vs.\ 0.0115), though both values represent substantial improvement over the naive baseline.
The continuous settings, where the outcome model is more sensitive to representation quality, show the clearer separation between the two strategies and provide the stronger evidence for full-text evaluation as the default.

\begin{table}[h]\centering\small
\begin{tabular}{@{}l cccc cccc@{}}
\toprule
& \multicolumn{4}{c}{\textbf{Low Confounding}} & \multicolumn{4}{c}{\textbf{High Confounding}} \\
\cmidrule(lr){2-5}\cmidrule(lr){6-9}
Estimator & MSE & Bias & Var & PS & MSE & Bias & Var & PS \\
\midrule
\rowcolor{gray!15}
\multicolumn{9}{@{}l}{\textit{Binary Outcomes}} \\[.3em]
True & .0000 & .0001 & .0000 & -- & .0000 & .0006 & .0000 & -- \\
Naive & .0230 & .1514 & .0001 & -- & .0553 & .2349 & .0001 & -- \\
\cmidrule(l){1-9}
Masking (Full Eval) & .0038 & .0287 & .0030 & 96.0\% & .0115 & .1000 & .0015 & 97.7\% \\
Masking (Mask Eval) & .0048 & .0397 & .0032 & 99.3\% & .0053 & .0515 & .0027 & 99.6\% \\

\midrule
\rowcolor{gray!15}
\multicolumn{9}{@{}l}{\textit{Continuous Outcomes}} \\[.3em]
True & .0229 & .0186 & .0226 & -- & .0229 & .0186 & .0226 & -- \\
Naive & .3669 & .6058 & .0000 & -- & 1.1238 & 1.0601 & .0000 & -- \\
\cmidrule(l){1-9}
Masking (Full Eval) & .0195 & -.0425 & .0177 & 87.0\% & .0442 & -.0674 & .0397 & 87.9\% \\
Masking adapt (Mask Eval) & .0637 & .2409 & .0057 & 17.0\% & .1211 & .3163 & .0211 & 18.2\% \\
\bottomrule
\end{tabular}
\caption{Binary (top) and continuous (bottom) outcomes using the full proposed loss function.  Full Eval rows are evaluated on full text; Mask Eval rows are evaluate on masked text. \label{tab:results_mask_eval}}
\end{table}

\subsection{Sensitivity to $\lambda_p$}

The overlap penalty weight $\lambda_p$ is the primary hyperparameter introduced by our method. Appendix~\ref{app:no-pen} established that replacement masking alone can be beneficial for ATE estimation; the penalty provides additional control over the propensity score distribution. Here, we investigate how the choice of $\lambda_p$ affects this control and
whether performance is sensitive to its value. We examine the adaptive update rule proposed as our main method with varying initial values $\lambda_p^{(0)}$. As further comparison, we provide the same sweep when evaluating on masked data.

\subsubsection{Adaptive Starting Values}
\label{app:auto-lambda-p-sens}
Tables~\ref{tab:sweep-auto-bin} and~\ref{tab:sweep-auto-cont} report results across $\lambda_p^{(0)} \in \{10, 30, 100, 300, 1000\}$ using the adaptive update rule. Two patterns emerge. First, the fraction of propensity scores in $[0.1, 0.9]$ increases monotonically with $\lambda_p^{(0)}$: stronger initial penalization drives propensity scores toward the interior of the unit interval more quickly, and lower starting values may not ramp up fast enough before validation loss is minimized and training stops. Second, MSE does not decrease
monotonically. Performance improves as $\lambda_p^{(0)}$ increases from 10 to an intermediate range (100--300), then degrades at 1000. This reflects a tradeoff: \textit{too little penalization allows residual treatment leakage, while too much suppresses treatment-predictive confounding information that the representation needs for outcome adjustment}.

Figure~\ref{fig:ps-mse-line-tradeoff} summarizes this tradeoff by plotting MSE against the median overlap rate for each starting value and setting. The relationship traces an approximate U-shape in MSE as the overlap rate increases, with the minimum consistently falling in the 87--99\% range. The pattern is qualitatively similar across binary and
continuous outcomes and across confounding strengths, though there appears to be some noise and sharp spikes in the binary high confounding setting, suggesting that moderate penalization enough to prevent extreme propensities but is robust across settings. 
The default starting value of $\lambda_p^{(0)} = 100$ falls within or near this favorable range in all four settings. While binary high confounding actually experiences a spike in MSE at $\lambda_p^{(0)} = 100$, we still see that this method strongly outperforms competitors. The total range of MSE across all starting values in this setting is 0.0070 to 0.0141, a
window small enough that simulation noise can shift the ranking. Every value of $\lambda_p^{(0)}$ in this range reduces MSE by at least half relative to the strongest competitor, and all achieve overlap rates above 78\%. The default starting value of $\lambda_p^{(0)} = 100$ falls within or near the favorable range in all four settings, and where it is not the single best value, the differences are small relative to the gains over competing methods.

\begin{table}[h]\centering\small
\begin{tabular}{@{}l cccc cccc@{}}
\toprule
& \multicolumn{4}{c}{\textbf{Low Confounding}} & \multicolumn{4}{c}{\textbf{High Confounding}} \\
\cmidrule(lr){2-5}\cmidrule(lr){6-9}
Estimator & MSE & Bias & Var & PS & MSE & Bias & Var & PS \\
\midrule
Masking ($\lambda_p^{(0)} = 10$) & .0047 & .0517 & .0020 & 69.6\% & .0076 & .0737 & .0022 & 78.3\% \\
Masking ($\lambda_p^{(0)} = 30$) & .0051 & .0573 & .0018 & 86.9\% & .0089 & .0839 & .0018 & 89.7\% \\
Masking ($\lambda_p^{(0)} = 100$) & .0038 & .0287 & .0030 & 96.0\% & .0115 & .1000 & .0015 & 97.7\% \\
Masking ($\lambda_p^{(0)} = 300$) & .0036 & .0093 & .0035 & 99.9\% & .0070 & .0681 & .0023 & 99.9\% \\
Masking ($\lambda_p^{(0)} = 1000$) & .0054 & .0445 & .0034 & 100.0\% & .0141 & .1067 & .0027 & 100.0\% \\
\bottomrule
\end{tabular}
\caption{Binary outcomes: masking $\lambda_p^{(0)}$ sweep.}
\label{tab:sweep-auto-bin}
\end{table}

\begin{table}[h]\centering\small
\begin{tabular}{@{}l cccc cccc@{}}
\toprule
& \multicolumn{4}{c}{\textbf{Low Confounding}} & \multicolumn{4}{c}{\textbf{High Confounding}} \\
\cmidrule(lr){2-5}\cmidrule(lr){6-9}
Estimator & MSE & Bias & Var & PS & MSE & Bias & Var & PS \\
\midrule
Masking ($\lambda_p^{(0)} = 10$) & .0731 & -.2020 & .0323 & 86.5\% & .1444 & -.3162 & .0444 & 84.4\% \\
Masking ($\lambda_p^{(0)} = 30$) & .0484 & -.1492 & .0261 & 86.8\% & .0732 & -.1834 & .0396 & 83.1\% \\
Masking ($\lambda_p^{(0)} = 100$) & .0195 & -.0425 & .0177 & 87.0\% & .0442 & -.0674 & .0397 & 87.9\% \\
Masking ($\lambda_p^{(0)} = 300$) & .0152 & -.0261 & .0145 & 91.1\% & .0488 & .1198 & .0344 & 97.8\% \\
Masking ($\lambda_p^{(0)} = 1000$) & .0176 & .0439 & .0157 & 98.4\% & .0513 & .0516 & .0486 & 99.9\% \\
\bottomrule
\end{tabular}
\caption{Continuous outcomes: masking $\lambda_p^{(0)}$ sweep.}
\label{tab:sweep-auto-cont}
\end{table}

\begin{figure}
    \centering
    \includegraphics[width=0.95\linewidth]{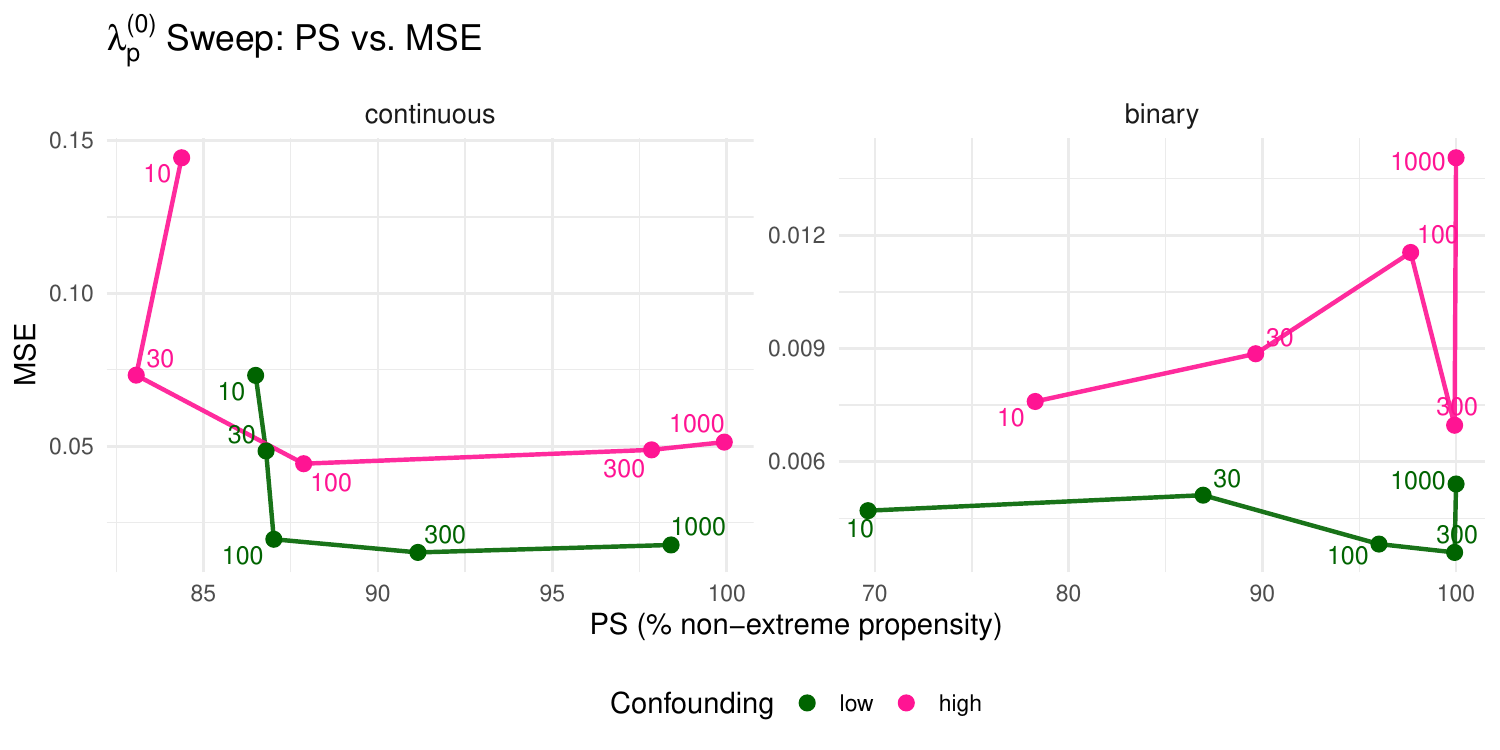}
    \caption{MSE versus median overlap rate (fraction of propensity scores in $[0.1, 0.9]$) across adaptive starting values $\lambda_p^{(0)} \in \{10, 30, 100, 300, 1000\}$, evaluated on full text.}

    \label{fig:ps-mse-line-tradeoff}
\end{figure}

\subsubsection{Adaptive Starting Values: Masked Evaluation}

Tables \ref{tab:mask-eval-bin-sweep} and \ref{tab:mask-eval-cont-sweep} repeat the $\lambda_p^{(0)}$ sweep from above but evaluate the trained encoder on masked text rather than full text. The comparison indicates how the evaluation strategy interacts with penalty strength. Two differences from the full-text evaluation sweep stand out. First, MSEs are generally higher across starting values and settings, consistent with that masked evaluation can expose the mask pattern as a treatment signal. Second, the overlap diagnostic responds more sluggishly to the penalty: even at $\lambda_p^{(0)}$ = 1000, the median fraction of propensity scores in [0.1, 0.9] reaches only 45–50\% for continuous outcomes.


\begin{table}[h]\centering\small
\begin{tabular}{@{}l cccc cccc@{}}
\toprule
& \multicolumn{4}{c}{\textbf{Low Confounding}} & \multicolumn{4}{c}{\textbf{High Confounding}} \\
\cmidrule(lr){2-5}\cmidrule(lr){6-9}
Estimator & MSE & Bias & Var & PS & MSE & Bias & Var & PS \\
\midrule
Masking ($\lambda_p^{(0)} = 10$) & .0119 & .0996 & .0019 & 17.7\% & .0219 & .1418 & .0018 & 17.6\% \\
Masking ($\lambda_p^{(0)} = 30$) & .0082 & .0748 & .0026 & 20.7\% & .0248 & .1518 & .0018 & 25.6\% \\
Masking ($\lambda_p^{(0)} = 100$) & .0090 & .0829 & .0021 & 42.2\% & .0172 & .1214 & .0025 & 48.3\% \\
Masking ($\lambda_p^{(0)} = 300$) & .0107 & .0945 & .0018 & 97.1\% & .0215 & .1393 & .0021 & 95.2\% \\
Masking ($\lambda_p^{(0)} = 1000$) & .0077 & .0586 & .0043 & 99.7\% & .0187 & .1250 & .0031 & 99.9\% \\
\bottomrule
\end{tabular}
\caption{Binary outcomes evaluated on masked text: masking $\lambda_p^{(0)}$ sweep}
\label{tab:mask-eval-bin-sweep}
\end{table}

\begin{table}[h]\centering\small
\begin{tabular}{@{}l cccc cccc@{}}
\toprule
& \multicolumn{4}{c}{\textbf{Low Confounding}} & \multicolumn{4}{c}{\textbf{High Confounding}} \\
\cmidrule(lr){2-5}\cmidrule(lr){6-9}
Estimator & MSE & Bias & Var & PS & MSE & Bias & Var & PS \\
\midrule
Masking ($\lambda_p^{(0)} = 10$) & .1473 & .3790 & .0036 & 13.4\% & .3251 & .5608 & .0106 & 12.9\% \\
Masking ($\lambda_p^{(0)} = 30$) & .0515 & .2129 & .0062 & 13.9\% & .2467 & .4842 & .0123 & 15.0\% \\
Masking ($\lambda_p^{(0)} = 100$) & .0637 & .2409 & .0057 & 17.0\% & .1211 & .3163 & .0211 & 18.2\% \\
Masking ($\lambda_p^{(0)} = 300$) & .0476 & .2021 & .0068 & 22.6\% & .1388 & .3437 & .0207 & 44.1\% \\
Masking ($\lambda_p^{(0)} = 1000$) & .0366 & .1716 & .0072 & 45.1\% & .0497 & .1244 & .0342 & 46.8\% \\
\bottomrule
\end{tabular}
\caption{Continuous outcomes evaluated on masked text: masking $\lambda_p^{(0)}$ sweep.}
\label{tab:mask-eval-cont-sweep}
\end{table}

\subsubsection{Fixed Starting Values}

What if instead of an automatically updating $\lambda_p$, we used fixed values?

Tables~\ref{tab:fixed-bin} and~\ref{tab:fixed-cont} report results for fixed $\lambda_P \in \{10, 30, 100, 300, 1000\}$. Fixed values in the range $10 - 300$ consistently outperform comparators from Table~\ref{tab:results_adaptive_lambda} across both outcome types and confounding strengths, confirming that the method is not sensitive to precise tuning. With excessive penalization ($\lambda_P = 1000$), overlap reaches 100\% but MSE increases, particularly under continuous high confounding.  This is consistent with the bias-overlap tradeoff formalized in Section~\ref{sec:theory}. Too little regularization permits residual treatment leakage, while too much suppresses the treatment-predictive confounding signal needed for outcome adjustment.

This performance suggest practitioners can choose a fixed value without elaborate tuning.  The adaptive rule provides a convenience by automating selection within this range but is not essential to the method's performance.

\begin{table}[h]\centering\small
\begin{tabular}{@{}l cccc cccc@{}}
\toprule
& \multicolumn{4}{c}{\textbf{Low Confounding}} & \multicolumn{4}{c}{\textbf{High Confounding}} \\
\cmidrule(lr){2-5}\cmidrule(lr){6-9}
Estimator & MSE & Bias & Var & PS & MSE & Bias & Var & PS \\
\midrule
Masking ($\lambda_p$=10)   & .0054 & .0565 & .0022 & 66.1\% & .0074 & .0729 & .0021 & 73.8\% \\
Masking ($\lambda_p$=30)   & .0059 & .0632 & .0019 & 90.8\% & .0094 & .0869 & .0018 & 92.3\% \\
Masking ($\lambda_p$=100)  & .0048 & .0397 & .0032 & 99.3\% & .0053 & .0515 & .0027 & 99.6\% \\
Masking ($\lambda_p$=300)  & .0037 & .0158 & .0034 & 100.0\% & .0082 & .0767 & .0023 & 100.0\% \\
Masking ($\lambda_p$=1000) & .0064 & .0553 & .0033 & 100.0\% & .0175 & .1212 & .0028 & 100.0\% \\
\bottomrule
\end{tabular}
\caption{Binary outcomes: fixed $\lambda_p$ sweep.}
\label{tab:fixed-bin}
\end{table}

\begin{table}[h]\centering\small
\begin{tabular}{@{}l cccc cccc@{}}
\toprule
& \multicolumn{4}{c}{\textbf{Low Confounding}} & \multicolumn{4}{c}{\textbf{High Confounding}} \\
\cmidrule(lr){2-5}\cmidrule(lr){6-9}
Estimator & MSE & Bias & Var & PS & MSE & Bias & Var & PS \\
\midrule
Masking ($\lambda_p$=10)   & .0571 & -.1945 & .0192 & 27.1\% & .0431 & .1183 & .0291 & 44.1\% \\
Masking ($\lambda_p$=30)   & .0428 & -.1468 & .0213 & 48.0\% & .0379 & -.0645 & .0337 & 65.6\% \\
Masking ($\lambda_p$=100)  & .0190 & -.0467 & .0168 & 93.5\% & .0440 & -.0887 & .0362 & 95.1\% \\
Masking ($\lambda_p$=300)  & .0148 & .0123 & .0146 & 100.0\% & .0398 & -.0219 & .0393 & 100.0\% \\
Masking ($\lambda_p$=1000) & .0202 & .0789 & .0140 & 100.0\% & .1050 & .2454 & .0448 & 100.0\% \\
\bottomrule
\end{tabular}
\caption{Continuous outcomes: fixed $\lambda_p$ sweep.}
\label{tab:fixed-cont}
\end{table}

\pagebreak

\clearpage 
\end{document}